\PassOptionsToPackage{dvipsnames}{xcolor}
\PassOptionsToPackage{nosort,nocompress}{cite}
\documentclass[a4paper,colorlinks]{preprint}
\usepackage[english]{babel}
\usepackage{mathtools}
\numberwithin{equation}{section}
\usepackage{amsthm}
\usepackage{thmtools}
\usepackage{amssymb}
\usepackage{bm}
\usepackage{amsfonts}
\usepackage{enumitem}
\setlist[itemize]{noitemsep, topsep=3pt}
\usepackage{microtype}
\usepackage{tikz}
\usetikzlibrary{arrows.meta,calc,decorations.pathreplacing}
\usepackage{complexity}
\usepackage{xspace}
\usepackage{comment}
\usepackage{bbm}
\usepackage[nameinlink,capitalize]{cleveref}

\AddToHook{cmd/oldmaketitle/before}{%
  \ihead*{}\chead*{}\ohead*{}
  \ifoot*{}\ofoot*{}
  \cfoot*{\usekomafont{pagenumber}\thepage}
  \KOMAoptions{headsepline=false,plainheadsepline=false,
    footsepline=false,plainfootsepline=false}
}

\newtheorem{theorem}{Theorem}[section]
\newtheorem{lemma}[theorem]{Lemma}
\newtheorem{corollary}[theorem]{Corollary}
\newtheorem{definition}[theorem]{Definition}
\newtheorem{remark}[theorem]{Remark}

\crefname{appendix}{Appendix}{Appendices}
\Crefname{appendix}{Appendix}{Appendices}
\crefname{remark}{remark}{remarks}
\Crefname{remark}{Remark}{Remarks}

\newcommand{\NN}{\mathbb{N}}
\newcommand{\CC}{\mathbb{C}}

\newcommand{\FF}{\mathbb{F}}
\newcommand{\ZZ}{\mathbb{Z}}
\newcommand{\defeq}{\coloneqq}
\newcommand{\bl}{\boldsymbol{\lambda}}

\newcommand{\elem}{\operatorname{elem}}
\newcommand{\pow}{\operatorname{pow}}
\newcommand{\Per}{\operatorname{Per}}
\newcommand{\rPer}{\operatorname{rPer}}

\newcommand{\Hom}{\operatorname{Hom}}
\newcommand{\card}[1]{\left|#1\right|}

\newcommand{\paren}[1]{\left(#1\right)}
\newcommand{\norm}[1]{\left\|#1\right\|}
\newcommand{\PMP}{\operatorname{PerfMatch}}
\newcommand{\MatchP}{\operatorname{Match}}
\newcommand{\MPG}[1]{\MatchP\paren{#1}}
\newcommand{\PMPG}[1]{\PMP\paren{#1}}
\newcommand{\wt}[1]{\operatorname{wt}\paren{#1}}
\newcommand{\VBP}{\mathsf{VBP}}

\newcommand{\examplevertexdata}{1/-1.3/1.3/4/23,2/1.3/-1.3/3/14,3/1.3/1.3/3/106,4/-1.3/-1.3/2/109}
\newcommand{\exampleedgedata}{1/2/4/3,1/3/4/3,1/4/4/2,2/3/3/3,2/4/3/2,3/4/3/2}
\tikzset{
  paper figure/.style={font=\small},
  original vertex/.style={circle, draw, fill=white, inner sep=0pt, minimum size=5mm},
  copy vertex/.style={circle, fill=black, inner sep=0pt, minimum size=2.5pt},
  blob/.style={draw=none, fill=black!22},
  between/.style={draw=black!50, line width=.35pt},
  within/.style={draw=black, line width=.65pt},
  figure label/.style={fill=white, inner sep=1pt},
  part token/.style={circle, draw=black!75, line width=.35pt, fill=black!22, inner sep=0pt, minimum size=6mm}
}
\newcommand{\drawexampleblowup}[2]{%
  \foreach \i/\x/\y/\w/\angle in \examplevertexdata {
    \coordinate (#1-\i) at (\x,\y);
    \draw[blob] (#1-\i) circle[radius=.48cm];
    \foreach \j in {1,...,\w}
      \coordinate (#1-\i-\j) at
        ($(#1-\i)+({\angle+360*(\j-1)/\w}:.29)$);
  }
  \foreach \i/\j/\wi/\wj in \exampleedgedata
    \foreach \a in {1,...,\wi}
      \foreach \b in {1,...,\wj}
        \draw[between] (#1-\i-\a) -- (#1-\j-\b);
  \ifnum#2=1\relax
    \foreach \i/\x/\y/\w/\angle in \examplevertexdata {
      \foreach \a in {1,...,\w} {
        \foreach \b in {1,...,\w} {
          \ifnum\a<\b
            \draw[within] (#1-\i-\a) -- (#1-\i-\b);
          \fi
        }
      }
    }
  \fi
  \drawexamplevertices{#1}
}
\newcommand{\drawexamplevertices}[1]{%
  \foreach \i/\x/\y/\w/\angle in \examplevertexdata
    \foreach \a in {1,...,\w}
      \node[copy vertex] at (#1-\i-\a) {};
}
\newcommand{\labelexampleblowup}[1]{%
  \foreach \i in {1,4}
    \node[left=5mm] at (#1-\i) {$\i$};
  \foreach \i in {2,3}
    \node[right=5mm] at (#1-\i) {$\i$};
}

\title{When Are Monomial Symmetric Polynomials~Hard?}
\author[1]{Markus~Bl{\"a}ser}
\author[2]{Radu~Curticapean\thanks{Funded by the European Union (ERC, CountHom, 101077083). Views and opinions expressed are however those of the author(s) only and do not necessarily reflect those of the European Union or the European Research Council Executive Agency.}}
\author[3]{Sagnik~Dutta}
\author[4]{Gorav~Jindal\protect\footnotemark[1]}
\affil[1]{Saarland University, Saarland Informatics Campus, Saarbr\"ucken, Germany\authorcr
  \email{mblaeser@cs.uni-saarland.de}}
\affil[2]{University of Regensburg, Regensburg, Germany, and IT University of Copenhagen, Copenhagen, Denmark\authorcr
  \email{radu.curticapean@ur.de}}
\affil[3]{Max Planck Institute for Informatics, Saarland Informatics Campus, Saarbr\"ucken, Germany\authorcr
  \email{sadutta@mpi-inf.mpg.de}}
\affil[4]{University of Regensburg, Regensburg, Germany\authorcr
  \email{gorav.jindal@ur.de}}

\shorttitle{Monomial Symmetric Polynomials}
\shortauthor{Bl{\"a}ser et al.}
\shortdate{~}
\novelty{}
\date{}

\begin{document}

\abstract{
For a partition $\boldsymbol{\lambda} = (\lambda_1,\ldots,\lambda_r)$ and $n\ge r$, the \emph{monomial symmetric polynomial} is defined as 
$
m_{\boldsymbol{\lambda}}(x_1,\ldots,x_n)
:=
\sum x_{i_1}^{\lambda_1} x_{i_2}^{\lambda_2} \cdots x_{i_r}^{\lambda_r},
$
where the sum ranges over all distinct monomials obtained by assigning the parts of $\boldsymbol{\lambda}$ to pairwise distinct variables.
We establish an almost-complete dichotomy for the complexity of the monomial symmetric polynomials $m_{\boldsymbol{\lambda}}$:
If a family of partitions contains polynomially many distinct parts, then the corresponding monomial symmetric polynomials (on sufficiently many variables) are $\mathsf{VNP}$-complete.
If the number of distinct parts is constant, the family admits polynomial-size circuits.
Thus the only remaining gap concerns the intermediate regime where the number of distinct parts grows subpolynomially.

As part of our proof, we use combinatorial properties of matching polynomials to establish $\mathsf{VNP}$-hardness of the perfect matching polynomial for \emph{blow-up graphs}; these are graphs with non-generic edge-weights that are obtained by ``blowing up'' a smaller graph.
We reduce these matching polynomials to monomial-symmetric polynomials by combining classical techniques from algebraic complexity theory with structural properties of Sidon sets.
}

\maketitle
\hypersetup{
  pdfauthor={Markus Bl\"aser, Radu Curticapean, Sagnik Dutta, Gorav Jindal}
}

\section{Introduction.}
Symmetry is a fundamental concept in mathematics and computer science, appearing across algebra, combinatorics, and theoretical computer science. In algebra, it appears prominently with the class of \emph{symmetric polynomials}, i.e., polynomials which are invariant under any permutation of their variables. Symmetric polynomials have been extensively studied in algebraic combinatorics and invariant theory, and they naturally arise in counting problems and algebraic computation. Understanding the computational complexity of natural families of symmetric polynomials is therefore a central question in algebraic complexity theory.

In algebraic complexity theory, the underlying complexity measure is usually the size of \emph{arithmetic circuits} computing the polynomial. An arithmetic circuit is a directed acyclic graph whose internal nodes are labeled by addition or multiplication gates, and whose leaves are labeled by input variables or field constants. The arithmetic complexity $L(f)$ of a polynomial $f\in \FF[x_1,\ldots,x_n]$ is the minimum size of an arithmetic circuit computing it, where circuit size is the number of nodes. We refer the reader to the surveys \cite{mahajan2013algebraiccomplexityclasses, shpilka2010arithmeticcircuitssurvey,saptharishi2015survey,burgisser2016permanentversusdeterminant} for background on algebraic complexity theory.

\paragraph{Monomial symmetric polynomials.}
In this paper, we investigate the circuit complexity of \emph{monomial symmetric polynomials}. Let $\bl=(\lambda_1,\ldots,\lambda_r)$ be a partition with $\lambda_1 \ge \cdots \ge \lambda_r > 0$. The monomial symmetric polynomial $m_{\bl}$ associated with $\bl$ in $n$ variables is 
\[
m_{\bl}(x_1,\ldots,x_n)
\defeq
\sum x_{i_1}^{\lambda_1}x_{i_2}^{\lambda_2}\cdots x_{i_r}^{\lambda_r},
\]
where the sum ranges over all \emph{distinct} monomials obtained by assigning the parts of $\bl$ to pairwise distinct variables; equivalently, we sum over the $S_n$-orbit but keep each resulting monomial only once. 

The family of monomial symmetric polynomials forms one of the standard bases of the ring of symmetric polynomials and plays a fundamental role in algebraic combinatorics and representation theory. 
Classical symmetric polynomials, such as elementary and power symmetric polynomials, can be viewed as special cases of monomial symmetric polynomials for specific partitions. For instance, the $k$-th elementary symmetric polynomial is $e_k=m_{(1,1,\ldots,1)}$, while the $k$-th power sum symmetric polynomial is $p_k = m_{(k)}$. 

\subsection{Complexity of Symmetric Polynomials.}

Symmetric polynomials have been studied in algebraic complexity theory for decades.
As a first question, one can ask whether there exist \emph{any} families of symmetric polynomials that require super-polynomial-size arithmetic circuits. Since we lack superpolynomial circuit lower bounds for explicit families, these results either need to address restricted models or require complexity-theoretic assumptions.

\paragraph{Unconditional lower bounds.} A seminal result of Nisan and Wigderson~\cite{NisanWigderson1996} established that elementary symmetric polynomials $e_k(x_1,\ldots,x_n)$ require exponential-size homogeneous depth-three circuits over fields of characteristic zero when $k$ is even, $k=\Theta(n)$, and $k\le n/3$.
For non-homogeneous depth-three formulas, Shpilka and Wigderson~\cite{shpilkawigdersondepth3arithmetic1999} prove a quadratic lower bound for computing $e_k(x_1,\ldots,x_n)$ when $k=\Theta(n)$ and $k\le 2n/3$, which is tight. The upper bound is due to Ben-Or, see \cite[Thm 4.2]{shpilkawigdersondepth3arithmetic1999}. Hrube{\v{s}} and Yehudayoff \cite{HrubesYehudayoff2011} investigate formulas of arbitrary depth. They prove that, for $n\ge 2k$, every multilinear homogeneous formula for $e_k(x_1,\ldots,x_n)$ has size $k^{\Omega(\log k)} n$ and construct homogeneous formulas of matching size, which are, however, not multilinear. Fournier \emph{et al.}\ \cite{fournier2017shiftedpartialderivativecomplexityelementarysymmetricpolynomials} study the shifted partial derivative complexity of elementary symmetric polynomials, strengthening the results of Nisan and Wigderson in certain parameter regimes.
Despite this rich history, the complexity of monomial symmetric polynomials has remained largely open.

\paragraph{Conditional lower bounds.}
Bl{\"a}ser and Jindal \cite{blasercomplexitysymmetricpolynomials2019} constructed families of symmetric polynomials that require circuits of superpolynomial size assuming $\VP \not= \VNP$. 
More precisely, their families are $\VNP$-complete under $c$-reductions, which implies they cannot be computed by polynomial-size arithmetic circuits unless $\VP = \VNP$. Here, $c$-reductions are the algebraic analogue of Turing reductions in classical complexity theory, allowing for oracle access to the polynomial being reduced to, see \Cref{sec:background} and \Cref{sec:preliminaries} for a formal definition.

The main tool of Bl{\"a}ser and Jindal is an effective version of the fundamental theorem for symmetric polynomials (\Cref{thm:symmetric-basis}) in terms of circuit complexity: 
If $(f_n)$ is a family of $\VNP$-complete polynomials, then the polynomial family $g_n$ defined by $g_n \defeq f_n(e_1, e_2, \dots, e_{v(n)})$ is a family of symmetric polynomials that is $\VNP$-complete under $c$-reductions, where $e_i$ is the $i$-th elementary symmetric polynomial and $v(n)$ is the number of variables in $f_n$. 
Chaugule \emph{et al.}~\cite{chaugule2023schurpolynomialsdonothavesmallformulasifdeterminantdoesnt} prove a similar result for formulas, and Bhattacharjee \emph{et al.}~\cite{bhattacharjeegcd2025} prove a constant-depth analogue.

Similar techniques also apply to Schur polynomials, another basis family of symmetric polynomials that play a central role in algebraic combinatorics and representation theory. 
They can be defined as specific determinants, and Chaugule \emph{et al.}~\cite{chaugule2023schurpolynomialsdonothavesmallformulasifdeterminantdoesnt} show that for certain families of partitions, Schur polynomials are hard for $\VBP$. Thus, if these specific determinants had algebraic formulas of polynomial size, then so would the ordinary determinants. 

\paragraph{Known results for monomial-symmetric polynomials.}

Curticapean \emph{et al.}~\cite{curticapeanmonomialmymmetricpolynomials2022} constructed an explicit family of monomial-symmetric polynomials that are $\VNP$-complete under $c$-reductions. Their hard family is specifically constructed to enable the hardness result. For example, the chosen partition has only distinct parts, and more importantly, these parts essentially enjoy the highly non-generic property of forming a \emph{Sidon set}, i.e., a set of integers $S$ such that $x,y\in S$ can be recovered uniquely from $x+y$.

The result in~\cite{curticapeanmonomialmymmetricpolynomials2022} was shown through a $c$-reduction from the $\VNP$-complete perfect matching polynomials to monomial symmetric polynomials. The starting point is the Binet--Minc identity (\Cref{thm:binet-minc}), which expresses monomial symmetric polynomials in terms of power sum symmetric polynomials as
\[
m_{\bl}(x_1, x_2, \ldots, x_n) = f(p_1, p_2, \ldots, p_d),
\]
for some ``witnessing polynomial'' $f$, where $d$ is the degree of $m_{\bl}$ and $p_i$ is the $i$-th power sum symmetric polynomial. By crucially using the Sidon properties of $\lambda$ and additional technical conditions on $\lambda$, a relatively direct reduction from perfect matching polynomials to this witnessing polynomial $f$ can be established.
Then, using results from \cite{chaugule2023schurpolynomialsdonothavesmallformulasifdeterminantdoesnt,blasercomplexitysymmetricpolynomials2019}, the witnessing polynomial $f(y_1, y_2, \ldots, y_d)$ is $c$-reduced to the monomial-symmetric polynomial $m_{\bl}$, giving the overall $\VNP$-hardness result.

\subsection{Our Results.}

We establish an almost-complete dichotomy for the algebraic complexity of monomial symmetric polynomials in terms of the number of distinct parts in the defining partition:
At one end, a \emph{constant} number of distinct parts leads to polynomial-size circuits.
At the other end, \emph{polynomially many} distinct parts lead to $\VNP$-completeness under $c$-reductions, provided that the number of variables in $m_\lambda$ is large enough.

\begin{theorem}
\label{thm:main-vnp-hardness}
Let $(f_n)_{n \in \NN}$ be a $p$-family of monomial symmetric polynomials such that the number of distinct parts in the defining partition of $f_n$ is at least $n^{\epsilon}$ for some $\epsilon>0$. Then $(f_n)_{n \in \NN}$ is $\VNP$-complete under $c$-reductions, assuming the number of variables in $f_n$ is at least the degree of $f_n$.
\end{theorem}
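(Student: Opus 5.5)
The proof follows the template of \cite{curticapeanmonomialmymmetricpolynomials2022}, with a more robust combinatorial source problem. Membership in $\VNP$ is routine. The coefficient of a monomial in $m_{\bl}$ is $0$ or $1$ and is determined by whether its exponent multiset equals $\bl$. This can be decided in $\#\P/\mathrm{poly}$, so Valiant's criterion applies. Hardness goes through a chain of $c$-reductions in three steps.

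\textbf{Step 1: reduce to the witnessing polynomial.} By the Binet--Minc identity (\Cref{thm:binet-minc}), $m_{\bl}=f_{\bl}(p_1,\ldots,p_d)$, where $d=|\bl|$. Since the number of variables is at least $d$, the power sums $p_1,\dots,p_d$ are algebraically independent. The effective fundamental theorem of \cite{blasercomplexitysymmetricpolynomials2019,chaugule2023schurpolynomialsdonothavesmallformulasifdeterminantdoesnt} then gives a $c$-reduction from $f_{\bl}(y_1,\ldots,y_d)$ to $m_{\bl}$. It therefore suffices to show that $f_{\bl}$ is $\VNP$-hard.

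\textbf{Step 2: select a Sidon-like subpartition.} The partition $\bl$ has $k\ge n^{\epsilon}$ distinct part sizes. From these I extract a subset $S$ of size $k^{\Omega(1)}$ that is a Sidon set, or at least has the property that sums of two elements of $S$ do not collide with other parts. Erd\H{o}s--Tur\'an type bounds give such a subset of size roughly $\sqrt{k}$. Write $\bl=\bl_S\cup\bl'$. I then set to constants, or to zero, the variables $y_j$ for all indices $j$ that are not of the form $s$ or $s+s'$ with $s,s'\in S$. The aim is for $f_{\bl}$ to project onto a polynomial whose monomials correspond to ways of pairing up parts. In the Binet--Minc expansion the coefficient of $\prod y_{\mu_i}$ sums over set partitions of the parts whose block sums are the $\mu_i$. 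The Sidon property guarantees that a product of $y_{s+s'}$ and $y_s$ uniquely encodes a matching on the multiset of $S$-parts.

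\textbf{Step 3: reduce from blow-up matching polynomials.} Parts repeat with multiplicities $w_s$. The encoding therefore yields the perfect matching polynomial of a \emph{blow-up graph}: each part size $s$ becomes a clique-blob of $w_s$ copies, the edge weight between copies of $s$ and $s'$ is $y_{s+s'}$, and a part left unpaired (a loop) is weighted by $y_s$. Multiplicity factors such as the numbers of repeated blocks appear as nonzero rational constants, which characteristic zero allows us to divide out. The remaining parts $\bl'$ are absorbed by setting their variables to generic constants and extracting homogeneous components by interpolation.

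The main obstacle is then to show that perfect matching polynomials of blow-ups of an arbitrary graph on $|S|$ vertices are $\VNP$-hard for every weight vector $(w_s)$. The edge weights are non-generic, since all edges between two blobs share a single variable. My first attempt is to take a $\VNP$-hard matching family on graphs $H$ with $|S|$ vertices, which exists because $|S|$ is polynomial in $n$, and recover $\PMP(H)$ from the blow-up polynomial. To do this I would replace each $y_{ss'}$ by $t\cdot y_{ss'}$ with a loop weight $z$ and interpolate in $z$ and $t$. This isolates the contributions in which each blob is matched internally except for at most one vertex, and those terms are exactly matchings of $H$ times an explicit nonzero count of internal matchings. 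Parity issues for blobs of even size are handled by adding pendant vertices. If this isolation fails, I fall back on combinatorial identities for matching polynomials under vertex blow-up (Hermite-type expansions) to invert the blow-up.
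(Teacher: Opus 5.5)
Your membership argument and Step~1 are fine and match the paper. The two steps that carry the real content, however, both have genuine gaps.

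\textbf{Step 2 does not isolate the matching structure.} Zeroing or fixing the $y_j$ with $j\notin S\cup(S+S)$ and invoking the Sidon property does not remove the unwanted set partitions from the Binet--Minc expansion. Sidon only decodes sums of \emph{two} parts. So a block $\{s_1,s_2,s_3\}$ of $S$-copies with $s_1+s_2+s_3=s_4+s_5$ (or $=s_4$), or a block of two non-$S$ parts summing to some $s\in S$, produces exactly a variable you keep. Once the singleton ``loop'' variables $y_s$ are kept, no degree count excludes such partitions, and without sum-freeness even $y_s$ and $y_{s'+s''}$ can coincide.

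Nor can the non-$S$ parts be absorbed by setting their variables to constants. Collisions $s+s'=\mu_q$ with $\mu_q\notin S$ are unavoidable in general: if the distinct parts are the powers of two, each with multiplicity $2$, then $2s$ is a part for every $s\in S$ except possibly the largest. So your alternative property cannot be guaranteed, and fixing $y_{\mu_q}$ would also fix the edge variable $y_{s+s'}$.

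The missing idea is the paper's forcing argument. Take $S$ sum-free Sidon with even total multiplicity $W$. Extract the homogeneous component of degree exactly $k-W/2$, where $k=\ell(\bl)$ is the number of parts, and set $y_s=0$ for all $s\in S$. Surviving partitions then have at most $k-W$ singletons, and all other blocks have size $\ge 2$. Counting forces exactly the $k-W$ non-$S$ singletons plus $W/2$ pairs of $S$-copies. Sum-freeness keeps the pair variables alive. The common factor $\prod_{\mu_q\notin S}y_{\mu_q}^{w_q}$ is divided out via \Cref{lem:divide-by-monomial}, which also disposes of the collisions $s+s'=\mu_q$. The result is a scalar multiple of the \emph{perfect} matching polynomial $\PMP_r^{(\mathbf w)}$, with no loops.

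\textbf{The blow-up step, which you identify as the main obstacle, is not solved.} Interpolating in your global edge scale and a single loop weight $z$ fixes only the total numbers of inter-blob edges and loops, not the per-blob type. For $\mathbf w=(2,2,2,2)$, matching both vertices of blob $1$ to blob $3$ and putting loops on all of blobs $2$ and $4$ gives two inter-blob edges and four loops. A lifted perfect matching of $K_4$ with one loop per blob has the same degrees, yet the first configuration is not a matching of $K_4$. Per-blob loop variables would separate these terms, but extracting a coefficient in $|S|$ independent variables by interpolation needs exponentially many evaluations. Global counting does work when all multiplicities are odd and loops are excluded, since parity then forces every blob to meet an inter-blob edge. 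But $\mathbf w$ is dictated by $\bl$, so you cannot add pendant vertices to repair even blobs.

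The paper handles arbitrary $\mathbf w$ as follows. It sacrifices half of the blobs as private gadget blocks $B_i$ for original blocks $A_i$. All $A_i$--$B_i$ edges get weight $\xi_i$, a root of $M(K_{a_i,b_i},x)$ that is not a root of $M(K_{a_i-1,b_i},x)$; such a root exists because $M(K_{a,b},x)$ has $b$ simple nonzero roots, by the Laguerre zero theorem. An original block untouched by the original matching of the prescribed size then contributes $0$. Any such matching that hits some block twice misses another block, so only lifts survive, giving $\PMP_{2\lfloor r/4\rfloor}\le_c\MatchP_r^{(\mathbf w)}$. The substitution $\theta\mapsto\lambda\theta+1$ on all edge variables then gives $\MatchP_r^{(\mathbf w)}\le_c\PMP_r^{(\mathbf w)}$, with the neutral edges of the complete underlying graph absorbing the parity constraints. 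Your Hermite fallback points toward such per-blob vanishing weights, but as written it is not yet an argument.
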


The tractable side of our classification is folklore; we state a bound for completeness in \Cref{sec:missing-proofs}.
\begin{theorem}
\label{thm:easy-case}
The monomial symmetric polynomial $m_{\bl}(x_1, x_2, \ldots, x_n)$ of degree $d$ can be computed by an arithmetic circuit of size at most $\poly(d)\cdot n^{O(t)}$ if the number of distinct parts in the partition $\bl$ is $t$.
\end{theorem}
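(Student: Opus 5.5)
We prove \Cref{thm:easy-case} by dynamic programming over the variables, tracking how many copies of each distinct part have been used so far. Write the distinct parts of $\bl$ as $a_1 > a_2 > \cdots > a_t$, and let $c_j$ be the multiplicity of $a_j$ in $\bl$. Then $\sum_j c_j = r \le d$. A monomial of $m_{\bl}$ corresponds bijectively to a function $\phi\colon[n]\to\{0,1,\ldots,t\}$ with $|\phi^{-1}(j)| = c_j$ for every $j\ge 1$. The monomial is $\prod_{i:\phi(i)\ge 1} x_i^{a_{\phi(i)}}$. Distinct such $\phi$ give distinct monomials because the $a_j$ are distinct and positive, and every monomial in the orbit arises this way. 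This turns the "keep each monomial once" convention into a clean sum:
\[
m_{\bl}(x_1,\ldots,x_n) = \sum_{\phi} \prod_{i=1}^n w_{\phi(i)}(x_i), \qquad w_0(x)=1,\quad w_j(x)=x^{a_j}.
\]

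Next, define for $0\le k\le n$ and each vector $\mathbf{u}=(u_1,\ldots,u_t)$ with $0\le u_j\le c_j$ the partial polynomial
\[
P_{k,\mathbf{u}} = \sum_{\phi\colon[k]\to\{0,\ldots,t\},\ |\phi^{-1}(j)|=u_j} \ \prod_{i=1}^k w_{\phi(i)}(x_i).
\]
These satisfy the recurrence
\[
P_{k,\mathbf{u}} = P_{k-1,\mathbf{u}} + \sum_{j=1}^t x_k^{a_j}\, P_{k-1,\mathbf{u}-e_j},
\]
where terms with a negative coordinate are omitted. The base cases are $P_{0,\mathbf{0}}=1$ and $P_{0,\mathbf{u}}=0$ for $\mathbf{u}\neq 0$. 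The recurrence holds because the last variable $x_k$ either carries no part or carries exactly one copy of some $a_j$. The output of the circuit is $m_{\bl} = P_{n,(c_1,\ldots,c_t)}$.

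For the size bound, precompute the powers $x_k^{a_j}$ for all $k,j$ by repeated squaring. This costs $O(nt\log d)$ gates, or $O(nd)$ gates if we compute all powers up to $d$ naively; both are $\poly(d)\cdot n$. The number of states is $(n+1)\prod_j (c_j+1)$. Each $c_j\le r\le n$, since otherwise $m_{\bl}=0$ and there is nothing to compute. Hence $\prod_j(c_j+1)\le (n+1)^t$, and the number of states is $n^{O(t)}$. Each state needs $O(t)$ gates, and $t\le d$. The total size is therefore $\poly(d)\cdot n^{O(t)}$, as claimed.

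There is no real obstacle here. The only point needing care is the bijection between monomials and the assignments $\phi$. It ensures that we count each distinct monomial exactly once, rather than summing over the whole $S_n$-orbit, which would introduce multiplicities.
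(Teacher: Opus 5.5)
Your proof is correct, and it reaches the bound by a different, more elementary route from the same generating function. Your $P_{k,\mathbf u}$ is exactly the coefficient of $y_1^{u_1}\cdots y_t^{u_t}$ in $\prod_{i\le k}\bigl(1+\sum_j y_j x_i^{a_j}\bigr)$. The paper likewise writes $m_{\bl}$ as the coefficient of $y_1^{c_1}\cdots y_t^{c_t}$ in the full product $\prod_{i=1}^n\bigl(1+\sum_j y_j x_i^{a_j}\bigr)$.

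The difference is how that one coefficient is extracted. The paper applies the Kronecker substitution $y_j\mapsto y^{(n+1)^{j-1}}$ to get a univariate polynomial of degree below $(n+1)^t$. It then evaluates the product at $(n+1)^t$ points and recovers the coefficient of $y^{W}$ by interpolation. You instead multiply in the factors one at a time and keep only the coefficients in the box $\mathbf u\le\mathbf c$, which is truncated multivariate multiplication.

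Your version has two advantages. It uses only the constants $0$ and $1$, with no interpolation and no need for many distinct field elements, so it works over any field (indeed any commutative ring). It also gives the sharper bound $O\bigl(t\,n\prod_j(c_j+1)+nd\bigr)$, which depends on the multiplicities rather than on $(n+1)^t$. The paper's version is a short black-box appeal to standard coefficient-extraction tools, and it reuses the same product that appears in the $\mathsf{VNP}$-membership part of the proof of \Cref{thm:main-vnp-hardness}.
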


Our classification has two limitations: First, \Cref{thm:main-vnp-hardness} requires that the number of variables is at least the degree of the polynomials in question. This is a genuine limitation. 
For example, the Vandermonde permanent discussed in \Cref{sec:conclusion-and-open-problems} has degree $\binom{n+1}{2}$ but only $n$ variables, making our theorem inapplicable.

Second, the classification leaves a blind spot in the intermediate regime where the number of distinct parts grows sub-polynomially in $n$.
A similar situation occurred in the classification of \emph{immanants}, another family of polynomials that are indexed by partitions $\lambda$.
B\"urgisser \cite{DBLP:books/daglib/0025071} proved the first hardness results for certain immanants, and Curticapean \cite{DBLP:conf/stoc/Curticapean21} later obtained a full dichotomy. Curiously, this dichotomy \emph{does} address regimes of sub-polynomially growing hardness parameters by using assumptions from parameterized complexity theory.
In our setting of monomial symmetric polynomials, both this issue and the overall hardness proof appear significantly more difficult: 
Immanants have matrix inputs, which directly give a very useful graph-theoretic structure that allows for a reduction from matchings, whereas hardness proofs for monomial symmetric polynomials have to recover this matching structure from additive relations among these exponents.
We return to this in \Cref{sec:conclusion-and-open-problems}.

\subsection{Proof Ideas.}

As in previous works~\cite{curticapeanmonomialmymmetricpolynomials2022}, we construct a $c$-reduction from the $\VNP$-complete perfect matching polynomials to monomial symmetric polynomials. We start again from the Binet--Minc identity (\Cref{thm:binet-minc}) that expresses $m_{\bl}$ of degree $d$ in terms of the power sum symmetric polynomials $p_i$, as sketched above for~\cite{curticapeanmonomialmymmetricpolynomials2022}. That is,
\[
m_{\bl}(x_1, x_2, \ldots, x_n) = f(p_1, p_2, \ldots, p_d).
\]
For arbitrary partitions $\lambda$ with many distinct parts, two new technical issues arise:
\begin{itemize}
    \item First, for general $\lambda$, we have no chance of controlling the additive relations among all distinct parts of $\bl$, so we must ensure carefully that we can extract a well-behaved subset. Indeed, we manage to extract a large sum-free Sidon subset from the distinct parts of $\lambda$.
    \item Second, repeated parts are a nontrivial issue that was invisible in the previous approaches. Repetitions essentially turn a generic graph into a \emph{blow-up graph}, where each selected part contributes a vertex block of size equal to the number of repetitions. Proving that the corresponding blow-up matching polynomials remain hard is a central new ingredient of the paper, and it is also a source of structured, non-generic permanent-like polynomial families.
\end{itemize}
In more detail, our proof strategy is as follows. Suppose that 
\[
\bl = (\lambda_1, \lambda_2, \ldots, \lambda_k)
= (\mu_1^{w_1}, \mu_2^{w_2}, \ldots, \mu_t^{w_t}),
\]
where $\mu_1 > \mu_2 > \cdots > \mu_t$ are the distinct parts of $\bl$, and $w_i$ is the multiplicity of $\mu_i$ in $\bl$.
The Binet--Minc identity establishes a correspondence between monomials of $f$ and set partitions of $[k]$. From the set $\{\mu_1,\ldots,\mu_t\}$ we extract a sum-free Sidon subset $S$ of size $\Omega(\sqrt t)$. The Sidon property makes pairwise sums of elements of $S$ uniquely decodable, and these sums will become edge variables of a complete graph.

We then take a carefully chosen homogeneous component of $f$ and set $y_{\mu}=0$ for every $\mu\in S$. This removes singleton blocks whose labels lie in $S$. Through a counting argument, we can force every surviving set partition to use all elements outside $S$ as singleton blocks, while the copies of elements in $S$ must be paired into blocks of size two; larger blocks would leave too few blocks in total. After dividing out the forced singleton contribution from the elements outside $S$, the remaining monomials are exactly encoded by pairings among the copies of elements of $S$.

If every selected part occurs once, these pairings are perfect matchings of a complete graph on $\Omega(\sqrt t)$ vertices, recovering essentially the situation from~\cite{curticapeanmonomialmymmetricpolynomials2022}. 
With multiplicities however, each selected part $\mu_i$ contributes $w_i$ indistinguishable copies, and the extracted polynomial is the perfect matching polynomial of a blow-up of the complete graph. The blow-up graph replaces vertex $i$ by a clique of size $w_i$; see \Cref{fig:partition-lemma} for an example and \Cref{sec:preliminaries} for the formal definition.

In \Cref{sec:complexity-of-blow-up-matching-polynomial}, we reduce the ordinary perfect matching polynomial to the perfect matching polynomial of a blow-up of a complete graph with arbitrary positive multiplicities and an even total number of vertices. The reduction attaches private gadget blocks and chooses algebraic edge-weights so that, after taking the right homogeneous component, every original block is used exactly once.

\subsection{Organization of the Paper.}

In \Cref{sec:preliminaries}, we introduce the necessary background and notation. 
In \Cref{sec:complexity-of-blow-up-matching-polynomial}, we reduce the perfect matching polynomial to its blown-up version for arbitrary positive multiplicities with an even total sum.
Finally, in \Cref{sec:blow-up-matching-polynomial-to-monomial-symmetric-polynomials}, we reduce the blown-up perfect matching polynomial to the monomial symmetric polynomial $m_{\bl}$.
Open problems are discussed in \Cref{sec:conclusion-and-open-problems}.
\Cref{sec:background} and \Cref{sec:missing-proofs} contain background material and omitted proofs.

\section{Preliminaries.}\label{sec:preliminaries}
We refer to \Cref{sec:apx-act} for a basic introduction to algebraic complexity. We will be working with $c$-reductions, which are the algebraic analogue of Turing reductions from classical complexity theory. The oracle complexity of a polynomial $f$ with respect to another polynomial $g$ is the minimum size of an arithmetic circuit computing $f$ that is allowed to query an oracle for $g$, denoted by $L^g(f)$. For more details on $c$-reductions, see \Cref{sec:apx-act}.

\subsection{Notation for Partitions.}
A partition $\bl$ of a positive integer $d$ is a tuple of positive integers $\bl=(\lambda_1,\lambda_2,\ldots,\lambda_k)$ such that $\lambda_1 \ge \lambda_2 \ge \cdots \ge \lambda_k > 0$ and $\sum_{i=1}^k \lambda_i = d$. We write $\bl \vdash d$. For a partition $\bl$, we define:
\begin{alignat*}{2}
d(\bl) &\defeq \sum_{i\in[k]} \lambda_i &\quad &\text{(degree)},\\
\ell(\bl) &\defeq k &\quad &\text{(number of parts)},\\
D(\bl) &\defeq \card{\{\lambda_1,\ldots,\lambda_k\}} &\quad &\text{(number of distinct parts)}.
\end{alignat*}
We also use a shorthand notation for partitions with repeated parts. If $\mu_1 > \mu_2 > \cdots > \mu_t$ (with $t = D(\bl)$) are the distinct parts of $\bl$ with multiplicities $w_1, w_2, \ldots, w_t$ respectively, we write
\[
\bl = (\mu_1^{w_1}, \mu_2^{w_2}, \ldots, \mu_t^{w_t}).
\]

\subsection{The Ring of Symmetric Polynomials.}
Unless stated otherwise, we assume that the underlying field $\FF$ is $\CC$. 
A polynomial $f \in \FF[x_1, x_2, \ldots, x_n]$ is called \emph{symmetric} if, for every permutation 
$\sigma \in S_n$, we have
\[
f(x_1, x_2, \ldots, x_n) = f(x_{\sigma(1)}, x_{\sigma(2)}, \ldots, x_{\sigma(n)}).
\]
The ring of symmetric polynomials in $n$ variables admits two natural sets of algebraically independent generators: the \emph{elementary symmetric polynomials} $e_1,\ldots,e_n$ and the \emph{power sum symmetric polynomials} $p_1,\ldots,p_n$, defined for $k\in[n]$ by
\begin{equation*}
e_k(x_1,\ldots,x_n) \defeq \sum_{1 \le i_1 < i_2 < \cdots < i_k \le n} x_{i_1} x_{i_2} \cdots x_{i_k}
\qquad \text{and} \qquad 
p_k(x_1,\ldots,x_n) \defeq \sum_{i=1}^{n} x_i^k.
\end{equation*}
Every symmetric polynomial can be uniquely expressed as a polynomial in the elementary symmetric polynomials, and also as a polynomial in the power sum symmetric polynomials. See \Cref{sec:apx-symm-poly} for details.
For a symmetric polynomial $f$, we define $f^{\elem}$ and $f^{\pow}$ to be the unique polynomials in $\FF[y_1,\ldots,y_n]$ such that
\[ 
f = f^{\elem}(e_1, \ldots, e_n) \quad \text{and} \quad f = f^{\pow}(p_1, \ldots, p_n).
\]

\subsection{Matching Polynomials.}

Let $G$ be an $n$-vertex graph whose edges are labelled by variables from a set $\mathbf{X}$ or constants from $\FF$. For a matching $M$ of $G$, denote its weight by $\wt{M}$; this is the product of the weights of the edges in $M$.
The \emph{matching polynomial} of $G$, denoted $\mathrm{Match}(G) \in \FF[\mathbf{X}]$, is defined as
\[
\MPG{G} \defeq \sum_{M} \wt{M},
\]
where $M$ ranges over all matchings of $G$.
In a similar way, the \emph{perfect matching polynomial} of $G$ is defined by
\[
\PMPG{G} \defeq \sum_{M} \wt{M},
\]
where the sum ranges over all perfect matchings $M$ of $G$.

Assume that $n \in \NN$ is even. Let 
$\mathbf{X}_n \defeq \{x_{i,j} \mid 1 \le i < j \le n\}$
be the set of $\binom{n}{2}$ variables corresponding to the edges of the complete graph $K_n$. 
We consider the edge $(i,j)\in E(K_n)$ to be weighted by $x_{i,j}$. 
The $\VNP$-completeness of the permanent~\cite{valiantcompletenessclassesalgebra1979} implies $\VNP$-completeness of the polynomial
\[
\PMP_n \defeq \PMPG{K_n}.
\]
Indeed we have $\Per(A) = \PMPG{G}$ for the complete bipartite graph $G$ whose edge $(i,j)$ has weight $A_{i,j}$.

We will also consider the so-called \emph{blow-up} matching polynomials of complete graphs.
For any $n\in\NN$ and a vector $\mathbf{w} = (w_1, w_2, \ldots, w_n)$ of positive integers, the $\mathbf{w}$-\emph{blow-up} of $K_n$, denoted $K_n^{(\mathbf{w})}$, is the edge-weighted graph obtained by replacing each vertex $i$ with a clique of size $w_i$, and replacing each edge $\{i,j\}$ with a complete bipartite graph between the corresponding cliques. See \Cref{fig:graph-blow-up} for an example.
Every edge between blocks $i$ and $j$ has weight $x_{i,j}$. The edges inside block $i$ have weight $z_i$, where $z_i$ is a new variable.
Thus the underlying unweighted graph is complete on $\sum_i w_i$ vertices, but its edge weights are shared according to the blocks.

For a vector $\mathbf{w}\in \ZZ_{>0}^n$, we define the polynomials $\MatchP_n^{(\mathbf{w})}$ and $\PMP_n^{(\mathbf{w})}$ to be the matching polynomial and the perfect matching polynomial, respectively, of the blow-up graph $K_n^{(\mathbf{w})}$.
Throughout the paper, blow-up vectors have strictly positive integer entries. (A zero entry would simply delete the corresponding vertex and could be removed in advance.)

\begin{figure}[t]
\centering
\begin{tikzpicture}[paper figure]
\begin{scope}[xshift=-2.8cm]
  \foreach \i/\x/\y/\w/\angle in \examplevertexdata
    \coordinate (L-\i) at (\x,\y);
  \foreach \i/\j/\wi/\wj in \exampleedgedata
    \draw[thick] (L-\i) -- (L-\j);
  \foreach \i in {1,...,4}
    \node[original vertex] at (L-\i) {$\i$};
  \node at (0,-2.3) {$K_4$};
\end{scope}
\begin{scope}[xshift=2.8cm]
  \drawexampleblowup{R}{1}
  \labelexampleblowup{R}
  \node at (0,-2.3) {$K_4^{(4,3,3,2)}$};
\end{scope}
\end{tikzpicture}
\caption{The complete graph $K_4$ and its blow-up with $\mathbf w=(4,3,3,2)$.
Block $i$ contains a clique of size $w_i$ (black edges).
Each edge of $K_4$ is replaced by all edges between its two blobs (gray edges).
Edges within blob $i$ have weight $z_i$; edges between blobs $i$ and $j$ have weight $x_{i,j}$.}
\label{fig:graph-blow-up}
\end{figure}

\subsection{Monomial Symmetric Polynomials and Rectangular Permanents.}
\label{sec:monsympol}

Fix a $k \times n$ matrix $X$ where $k \le n$ and the $(i,j)$-th entry of $X$ is a variable $x_{i,j}$. For a positive integer $m$, we denote by $[m]$ the set $\{1,2,\ldots,m\}$. For a subset $J \subseteq [n]$ of size $k$, we denote by $X_J$ the submatrix obtained by keeping only the columns indexed by $J$.
The \emph{rectangular permanent} $\rPer_{k,n}$ is defined by
\[
\rPer_{k,n}(X) = \sum_{J \in \binom{[n]}{k}} \Per(X_J).
\]
The following polynomial identity will play a key role in our arguments.
For a positive integer $k$, let $\mathcal{P}_k$ denote the set of all partitions of $[k]$ into non-empty subsets.
\begin{theorem}[Binet--Minc Identity~\cite{minc1984permanents}]\label{thm:binet-minc}
Let $\FF$ be any field and let $X$ be a $k \times n$ matrix as above. 
Then
\[
\rPer_{k,n}(X) =
\sum_{\mathcal{I} \in \mathcal{P}_k}
(-1)^{k - |\mathcal{I}|}
\prod_{I \in \mathcal{I}} (|I| - 1)! \, S_I 
\quad\text{with}\quad 
S_I \defeq \sum_{j=1}^{n} \prod_{i \in I} x_{i,j}.
\]
\end{theorem}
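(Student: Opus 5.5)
We prove the Binet--Minc identity by Möbius inversion on the lattice of set partitions of $[k]$. The rectangular permanent is a sum over injective maps. Indeed, $\Per(X_J)$ sums $\prod_i x_{i,\sigma(i)}$ over bijections $\sigma\colon[k]\to J$. Summing over all $k$-subsets $J$ therefore gives
\[
\rPer_{k,n}(X)=\sum_{\substack{\phi\colon[k]\to[n]\\ \phi \text{ injective}}} \prod_{i\in[k]} x_{i,\phi(i)} .
\]
Each $\phi$ arises exactly once, because $J$ is determined as the image of $\phi$.

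\textbf{Products of the $S_I$ as coarse sums.} For a set partition $\mathcal{I}\in\mathcal{P}_k$, expanding $\prod_{I\in\mathcal{I}}S_I$ amounts to choosing one column $j_I$ for each block. This is the same as summing $\prod_i x_{i,\phi(i)}$ over all maps $\phi$ that are constant on every block of $\mathcal{I}$, with no injectivity required across blocks. Let $F(\mathcal{I})$ denote this sum.

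Every map $\phi\colon[k]\to[n]$ induces its kernel partition $\ker\phi$, whose blocks are the fibres of $\phi$. A map is constant on each block of $\mathcal{I}$ exactly when $\mathcal{I}\le \ker\phi$ in the refinement order. Define $G(\pi)$ as the sum over maps with $\ker\phi=\pi$. Then
\[
F(\mathcal{I})=\sum_{\pi\ge\mathcal{I}} G(\pi),
\]
and the quantity we want is $\rPer_{k,n}(X)=G(\hat 0)$, where $\hat 0$ is the partition of $[k]$ into singletons.

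\textbf{Möbius inversion.} Inverting on the partition lattice gives
\[
G(\hat 0)=\sum_{\mathcal{I}} \mu(\hat 0,\mathcal{I})\,F(\mathcal{I}).
\]
We then quote the classical value of the Möbius function of this lattice:
\[
\mu(\hat 0,\mathcal{I})=\prod_{I\in\mathcal{I}} (-1)^{|I|-1}(|I|-1)! .
\]
The sign collects to $(-1)^{\sum_I (|I|-1)}=(-1)^{k-|\mathcal{I}|}$, which yields exactly the stated formula.

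\textbf{The Möbius value.} This is the one step needing justification beyond bookkeeping. The interval $[\hat 0,\mathcal{I}]$ is isomorphic to the product over blocks $I\in\mathcal{I}$ of full partition lattices $\Pi_{|I|}$, so it suffices to show $\mu(\Pi_m)=(-1)^{m-1}(m-1)!$.

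I would prove this by evaluating the identity $\sum_{\pi}\mu(\hat 0,\pi)\,x^{\underline{|\pi|}}\cdots$ in its standard form. Count maps $[m]\to[x]$ in two ways: $x^m=\sum_{\pi} x(x-1)\cdots(x-|\pi|+1)$, where the sum is over kernels $\pi$. Möbius inversion then gives
\[
x(x-1)\cdots(x-m+1)=\sum_\pi \mu(\hat 0,\pi)\,x^{|\pi|}.
\]
Comparing coefficients of $x^1$, only $\pi=\hat 1$ (the one-block partition) has $|\pi|=1$, so $\mu(\hat 0,\hat 1)$ equals the linear coefficient of the falling factorial, namely $(-1)^{m-1}(m-1)!$.

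A second route avoids Möbius theory altogether. Verify the identity directly by computing, for each fixed $\phi$ with kernel $\pi$, the coefficient contributed by the right-hand side. This coefficient is
\[
\sum_{\mathcal{I}\le\pi}\prod_{I\in\mathcal{I}}(-1)^{|I|-1}(|I|-1)!,
\]
which factorizes over the blocks of $\pi$. For a block of size $m$ the factor is the signed count of permutations of an $m$-set, grouped by cycle type: $\sum_{\sigma\in S_m}\operatorname{sgn}(\sigma)$. This equals $1$ for $m=1$ and $0$ for $m\ge2$. Hence only injective $\phi$ survive, each with coefficient $1$.

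Both arguments hold over any field, since all coefficients involved are integers.
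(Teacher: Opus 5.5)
Your argument is correct, but there is no proof in the paper to compare it with. The paper quotes the Binet--Minc identity from Minc's monograph and only uses its specialization $x_{i,j}=x_j^{\lambda_i}$, in which $S_I$ becomes the power sum $p_{\lambda_I}$ (\Cref{cor:binet-minc}). What you give is therefore a self-contained justification of a cited fact, and both of your routes work.

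In the M\"obius route, your two identifications are exactly right: $\rPer_{k,n}(X)$ is the sum over maps of kernel $\hat 0$, and $\prod_{I\in\mathcal I}S_I$ is the sum over maps whose kernel is coarser than $\mathcal I$. Three small fixes in your computation of $\mu(\hat 0,\hat 1)$. First, state the counting relation $x^{|\sigma|}=\sum_{\pi\ge\sigma}(x)_{|\pi|}$ for every $\sigma$, not only for $\hat 0$, so that inversion applies; it follows from the same count. Second, note that the resulting identity holds for all positive integers $x$ and hence as a polynomial identity. Third, delete the garbled sentence containing $x^{\underline{|\pi|}}\cdots$.

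Your second route is the more economical one, since it needs no M\"obius theory. It rests on the observation that $(-1)^{k-|\mathcal I|}\prod_{I}(|I|-1)!$ is the signed number of permutations of $[k]$ whose cycles are the blocks of $\mathcal I$. So the right-hand side equals $\sum_{\sigma\in S_k}\operatorname{sgn}(\sigma)\prod_{C}S_C$, with $C$ ranging over the cycles of $\sigma$. The vanishing of $\sum_{\sigma\in S_m}\operatorname{sgn}(\sigma)$ for $m\ge 2$ then kills every non-injective $\phi$.

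All coefficients in either argument are integers. Hence the identity holds in $\ZZ[x_{i,j}]$ and therefore over every field, as the statement claims.
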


Recall that the monomial symmetric polynomial $m_{\bl}(x_1,x_2,\ldots,x_n)$ associated with the partition $\bl$ in $n$ variables is defined as
\[
m_{\bl}(x_1,x_2,\ldots,x_n)
\defeq
\sum
x_{i_1}^{\lambda_1}
x_{i_2}^{\lambda_2}
\cdots
x_{i_k}^{\lambda_k}.
\]
Here the sum ranges over all distinct monomials obtained by assigning the parts of $\bl$ to pairwise distinct variables; equivalently, over all monomials whose nonzero exponent multiset is $\{\lambda_1,\ldots,\lambda_k\}$.
When the number of variables is clear from the context, we simply write $m_{\bl}$. 

Write $\bl=(\mu_1^{w_1},\ldots,\mu_t^{w_t})$, where the $\mu_j$ are distinct, and put $\kappa(\bl)\defeq\prod_{j=1}^t w_j!$. For $k\le n$, we have
\[
\rPer_{k,n}\!
\begin{pmatrix}
x_1^{\lambda_1} & x_2^{\lambda_1} & \cdots & x_n^{\lambda_1} \\
x_1^{\lambda_2} & x_2^{\lambda_2} & \cdots & x_n^{\lambda_2} \\
\vdots & \vdots & \ddots & \vdots \\
x_1^{\lambda_k} & x_2^{\lambda_k} & \cdots & x_n^{\lambda_k}
\end{pmatrix}
=\kappa(\bl)\,m_{\bl}(x_1,\ldots,x_n).
\]
Indeed, the rectangular permanent sums over injective assignments of rows to columns. Each distinct monomial occurs exactly $\kappa(\bl)$ times, since the $w_j$ rows with exponent $\mu_j$ can be assigned to their chosen columns in $w_j!$ ways.
For the matrix above, the block sums in \Cref{thm:binet-minc} are
\[
S_I=\sum_{j=1}^n x_j^{\lambda_I}=p_{\lambda_I}(x_1,\ldots,x_n),
\qquad \lambda_I\defeq\sum_{i\in I}\lambda_i.
\]
Dividing the Binet--Minc identity by $\kappa(\bl)$ therefore yields the following formula.

\begin{corollary}\label{cor:binet-minc}
Let $\bl=(\lambda_1,\ldots,\lambda_k)=(\mu_1^{w_1},\ldots,\mu_t^{w_t})$ be a partition with distinct parts $\mu_1,\ldots,\mu_t$, and let $n\ge k$. Over a field of characteristic zero,
\begin{equation}
m_{\bl}(x_1,\ldots,x_n)
=
\frac{1}{\kappa(\bl)}
\sum_{\mathcal{I}\in\mathcal{P}_k}
(-1)^{k-|\mathcal{I}|}
\prod_{I\in\mathcal{I}}\bigl((|I|-1)!\,p_{\lambda_I}(x_1,\ldots,x_n)\bigr),
\end{equation}
where $\kappa(\bl)=\prod_{j=1}^t w_j!$ and $\lambda_I=\sum_{i\in I}\lambda_i$.
\end{corollary}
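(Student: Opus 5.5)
The plan is to specialize the Binet--Minc identity (\Cref{thm:binet-minc}) to the $k\times n$ matrix whose $(i,j)$ entry is $x_j^{\lambda_i}$, and then divide by the multiplicity factor $\kappa(\bl)$. There are two steps.

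\textbf{Step 1: evaluate both sides.} Substituting $x_{i,j}\mapsto x_j^{\lambda_i}$ is a ring homomorphism, so the identity of \Cref{thm:binet-minc} survives the substitution. Under it the block sum becomes
\[
S_I=\sum_{j=1}^n\prod_{i\in I}x_j^{\lambda_i}=\sum_{j=1}^n x_j^{\lambda_I}=p_{\lambda_I}(x_1,\ldots,x_n).
\]
So the right-hand side of \Cref{thm:binet-minc} becomes exactly $\sum_{\mathcal I\in\mathcal P_k}(-1)^{k-|\mathcal I|}\prod_{I\in\mathcal I}(|I|-1)!\,p_{\lambda_I}$.

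\textbf{Step 2: identify the left-hand side with $\kappa(\bl)\,m_{\bl}$.} By definition, $\rPer_{k,n}$ of this matrix is
\[
\sum_{J}\Per(X_J)=\sum_{\phi}\prod_{i=1}^k x_{\phi(i)}^{\lambda_i},
\]
where $\phi$ ranges over all injections $[k]\to[n]$ (this needs $k\le n$). Each injection $\phi$ produces a monomial whose nonzero exponents form the multiset $\{\lambda_1,\ldots,\lambda_k\}$. Conversely, every monomial of $m_{\bl}$ is obtained from exactly $\prod_j w_j!$ injections. To see this, fix such a monomial. For each $j$, the set of $w_j$ variables carrying exponent $\mu_j$ is determined by the monomial, and the $w_j$ rows with $\lambda_i=\mu_j$ may be mapped onto that set bijectively in any of $w_j!$ ways. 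Hence the left-hand side equals $\kappa(\bl)\,m_{\bl}$.

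Finally, in characteristic zero $\kappa(\bl)\neq 0$, so we may divide both sides by it to obtain the displayed formula. The only point needing care is the counting in Step 2: one must check that distinct injections giving the same monomial differ exactly by permutations within rows of equal exponent. This holds because the map $\phi$ is recovered from the monomial up to such permutations, as the exponent of $x_{\phi(i)}$ equals $\lambda_i$. No step is a serious obstacle.
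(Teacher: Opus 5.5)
Your proposal is correct and is essentially the paper's own argument. Like the paper, you specialize the Binet--Minc identity to the matrix $(x_j^{\lambda_i})$, note that $S_I=p_{\lambda_I}$, observe that the rectangular permanent (a sum over injections) produces each monomial of $m_{\bl}$ exactly $\kappa(\bl)$ times, and divide by $\kappa(\bl)\neq 0$.
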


\subsection{Tools from Additive Combinatorics.}

To extract the perfect matching polynomial of complete graphs from the power-sum representation of $m_{\bl}$, we also need the notion of Sidon sets. A set of positive integers $S$ is called a \emph{Sidon set} if, for any four elements $a,b,c,d \in S$, the equality
\[
a+b = c+d
\]
implies that the unordered pairs $\{a,b\}$ and $\{c,d\}$ are equal. In other words, unordered pairwise sums are distinct. We will use the following classical result on the existence of large Sidon subsets of finite sets of positive integers.

\begin{theorem}[Existence of large Sidon sets~\cite{abbott1990sidon}]
\label{thm:sidonexistence}
For every finite set $A \subseteq \NN$ of sufficiently large cardinality, there exists a Sidon subset $S \subseteq A$ such that
\[
\card{S} \ge 0.08 \sqrt{\card{A}}.
\]
\end{theorem}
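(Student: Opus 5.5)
The plan is the standard ``random dilation into a cyclic group carrying a dense Sidon set'' argument, followed by a deletion step.

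\textbf{Step 1 (a dense modular Sidon set).} Fix a prime power $q$ and put $m \defeq q^2-1$. Bose's construction gives a set $B\subseteq \ZZ_m$ with $|B|=q$ that is Sidon in $\ZZ_m$: if $b_1+b_2\equiv b_3+b_4 \pmod m$, then $\{b_1,b_2\}=\{b_3,b_4\}$. We choose $q$ as the prime nearest to $\sqrt{C|A|}$ for a constant $C$ to be optimised at the end. Then $m\approx C|A|$ and $|B|\approx\sqrt m$, using Bertrand-type prime gaps, which is where ``sufficiently large cardinality'' enters.

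\textbf{Step 2 (random pullback).} Pick a random affine map $\phi(x)=ux+s \bmod m$. Here $s$ is uniform in $\ZZ_m$, and $u$ is a random unit, with an additional random choice of a large auxiliary prime modulus so that differences $a-c$, for distinct $a,c\in A$, are rarely killed. Set $A'\defeq\{a\in A:\phi(a)\in B\}$. For each fixed $a$, the image $\phi(a)$ is uniform, so
\[
\mathbb{E}|A'|=|A|\,|B|/m\approx |A|/\sqrt{C|A|}=\sqrt{|A|/C}.
\]

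\textbf{Step 3 (what can go wrong).} Suppose $a+b=c+d$ in $\ZZ$ with $a,b,c,d\in A'$. Then
\[
\phi(a)+\phi(b)\equiv\phi(c)+\phi(d) \pmod m,
\]
so by the Sidon property of $B$ we get $\{\phi(a),\phi(b)\}=\{\phi(c),\phi(d)\}$. Hence $\{a,b\}\ne\{c,d\}$ can only happen if some two distinct elements of $A'$ collide under $\phi$. Let $X$ be the number of pairs $a\ne c$ in $A'$ with $\phi(a)=\phi(c)$. Deleting one element from each such pair leaves a genuine Sidon subset of $A$ of size at least $|A'|-X$. The probability that a fixed pair collides and both images land in $B$ is at most about $(|B|/m)\cdot\Pr[u(a-c)\equiv0]$. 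The randomness in $u$ and in the modulus makes this roughly $|B|/m^2$. Therefore $\mathbb{E}X\lesssim |A|^2|B|/m^2=O(\sqrt{|A|}/C^{3/2})$.

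\textbf{Step 4 (optimise).} Combining Steps 2 and 3,
\[
\mathbb{E}(|A'|-X)\ge \sqrt{|A|}\,\bigl(C^{-1/2}-O(C^{-3/2})\bigr).
\]
Choosing $C$ as a suitable constant and fixing an outcome that attains at least the expectation gives a Sidon subset of size $c\sqrt{|A|}$. Tracking the constants in Bose's construction and in the prime gaps is meant to give $c\ge 0.08$.

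\textbf{Main obstacle.} The hard part is Step 3, namely controlling collisions $ua\equiv uc\pmod m$. Since $m=q^2-1$ is not prime, a fixed difference $a-c$ may share large factors with $m$, and then a random unit $u$ does not spread it out. I would first try to replace $B$ by an Erdős--Turán-type Sidon set in $\ZZ_p\times\ZZ_p$ with $p$ prime, and take $\phi(x)=(ux+s,\,u'x+s')\bmod p$. Primality then makes the collision probability for each coordinate exactly $1/p$ whenever $p\nmid a-c$. The residual pairs with $p\mid a-c$ I would handle by also averaging over $p$ in a dyadic range $[P,2P]$, since each difference has only $O(\log\max A/\log P)$ prime divisors there. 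That averaging is the delicate bookkeeping on which the final constant depends.
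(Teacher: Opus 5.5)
The paper does not prove this theorem; it quotes it from Abbott. Your argument therefore has to stand on its own, and it breaks at Step~3.

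For the deduction in Step~3 to work, $\phi$ must be additive, i.e.\ $\phi(x)=ux+s \bmod m$. For a unit $u$, the collision $u(a-c)\equiv 0 \pmod m$ occurs exactly when $m\mid a-c$, so the random unit changes nothing. Nothing else prevents collisions either, because the elements of $A$ are arbitrary positive integers, not bounded in terms of $\card{A}$. For $A\subseteq m\ZZ$ the map $\phi$ is constant on $A$. If you randomise $q$, take instead $A\subseteq L\ZZ$ with $L$ a common multiple of all candidate moduli.

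An ``auxiliary prime modulus'' cannot be inserted. $\ZZ_p$ has no nontrivial homomorphism to $\ZZ_m$ unless $p\mid m$, so the implication $a+b=c+d\Rightarrow\phi(a)+\phi(b)=\phi(c)+\phi(d)$ would be lost.

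Your fallback fails for a structural reason. Every affine map $x\mapsto(ux+s,u'x+s')$ from $\ZZ$ to $\ZZ_p^2$ factors through $\ZZ_p$. Its image therefore lies on the line $\{(s,s')+\tau(u,u'):\tau\in\ZZ_p\}$, which meets the parabola $\{(x,x^2)\}$ in at most two points. So at most two elements survive your deletion step, for every $A$ (already for $A=\{1,\dots,n\}$) and every $p$. Averaging over $p\in[P,2P]$ cannot repair this. In addition, your divisor count $O(\log\max A/\log P)$ is useless, because $\max A$ is uncontrolled.

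The missing idea is to decouple the two scales. The dilation must separate all differences, so it must act in a group much larger than $\max A$. The Sidon set, however, must live at scale about $\card{A}$.

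\begin{itemize}
\item \emph{The dilation.} Take $\theta,s$ independent and uniform in $[0,1)$, and set $\phi(a)\defeq\theta a+s \bmod 1$. A discrete alternative is a random affine map modulo a prime $p>\max A$. For $a\neq c$, $\theta(c-a)\bmod 1$ is uniform and independent of $\phi(a)$, so no difference is ever killed.
\item \emph{The pullback.} Put $\beta(a)\defeq\lfloor N\phi(a)\rfloor\in\ZZ_N$ and keep $A'\defeq\{a:\beta(a)\in B_0\}$.
\item \emph{A robust Sidon set.} $\beta$ is additive only up to an error in $\{-1,0,1\}$, so the Sidon set must tolerate this error. Take $N=3M$ and $B_0=3B_1$ for a Bose set $B_1\subseteq\ZZ_M$. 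Then $a+b=c+d$ gives $\beta(a)+\beta(b)-\beta(c)-\beta(d)\equiv\epsilon\pmod N$ with $\epsilon\in\{-1,0,1\}$ and $3\mid\epsilon$. Hence $\epsilon=0$, and $\{\beta(a),\beta(b)\}=\{\beta(c),\beta(d)\}$.
\item \emph{The count.} A collision $\beta(a)=\beta(c)$ with $a\in A'$ now has probability exactly $\card{B_0}/N^2$ for \emph{every} pair. Your Steps~3--4 then go through:
$\mathbb{E}(\card{A'}-X)\ge\frac{n\card{B_0}}{N}\bigl(1-\frac{n}{2N}\bigr)$ with $\card{B_0}\ge\sqrt{N/3}$. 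Choosing $N\approx\frac32 n$ gives about $0.31\sqrt{n}$, above the required $0.08\sqrt{n}$.
\end{itemize}
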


We shall need a slightly stronger version of the above theorem, in the sense that we want Sidon subsets $S \subseteq A$ such that the sum of two elements of $S$ is not equal to any element of $S$. This stronger version (\Cref{cor:sidon-sum-free-subset}) follows from the following result of Erd\H{o}s \cite{erdos1965extremal}.

\begin{theorem}[Erd\H{o}s,  {\cite[Theorem 2]{erdos1965extremal}}]
\label{thm:erdos-sum-free-subset}
For every finite set $A$ of non-zero real numbers, there exists a subset $S \subseteq A$ with $\card{S} \ge \card{A}/3$ such that
\[
a+b \neq c \quad \text{for any } a,b,c \in S.
\]
\end{theorem}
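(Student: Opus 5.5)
The plan is to reduce the statement to a problem about nonzero integers. There the classical modular averaging argument of Erd\H{o}s applies: pick a random dilate modulo a suitable prime and keep the elements that land in the ``middle third'' of $\ZZ_p$. Any subset whose image there is sum-free is sum-free in the original set, because additive relations $a+b=c$ are preserved under ring homomorphisms and linear maps. Note that the condition in \Cref{thm:erdos-sum-free-subset} allows $a=b$ or other coincidences. The argument handles these uniformly, since it only uses that the relation $a+b=c$ is preserved.

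\textbf{Step 1 (reduction to integers).} Let $V$ be the $\mathbb{Q}$-linear span of $A$ inside $\mathbb{R}$; it is finite-dimensional. There are finitely many nonzero vectors $a\in A$ and $a-a'$ for distinct $a,a'\in A$. I choose a $\mathbb{Q}$-linear functional $\varphi\colon V\to\mathbb{Q}$ that vanishes on none of them. This is possible because $V$ over the infinite field $\mathbb{Q}$ is not covered by finitely many proper kernels; concretely, take $\varphi(v)=\sum_i c_i v_i$ in some basis with generic rational $c_i$. Multiplying by a common denominator makes $\varphi(A)\subseteq\ZZ\setminus\{0\}$, and $\varphi$ is injective on $A$. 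If $S'\subseteq\varphi(A)$ is sum-free and $|S'|>|A|/3$, then $S\defeq\varphi^{-1}(S')\cap A$ has the same size. It is also sum-free, since $a+b=c$ in $S$ would give $\varphi(a)+\varphi(b)=\varphi(c)$ in $S'$.

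\textbf{Step 2 (integer case, averaging).} Let $B\subseteq\ZZ\setminus\{0\}$ be finite. Choose a prime $p=3k+2$ with $p>2\max_{b\in B}|b|$; by Dirichlet there are infinitely many primes $\equiv 2 \pmod 3$. Let $C=\{k+1,\ldots,2k+1\}\subseteq\ZZ_p$. This set is sum-free in $\ZZ_p$: a sum of two elements of $C$ lies in $\{2k+2,\ldots,4k+2\}$, which modulo $p$ is $\{2k+2,\ldots,3k+1\}\cup\{0,\ldots,k\}$, disjoint from $C$. Moreover $|C|=k+1>(p-1)/3$.

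Pick $x$ uniformly from $\{1,\ldots,p-1\}$. Each $b\in B$ is nonzero modulo $p$, so $xb \bmod p$ is uniform on $\ZZ_p^\ast$. Hence $\Pr[xb\in C]=(k+1)/(3k+1)>1/3$, and the expected size of $\{b\in B: xb \bmod p\in C\}$ exceeds $|B|/3$. Fix an $x$ attaining at least this expectation, and let $S'$ be the corresponding set. If $a+b=c$ with $a,b,c\in S'$, then $xa+xb\equiv xc \pmod p$ with all three residues in $C$, a contradiction. So $S'$ is sum-free with $|S'|>|B|/3$, and Step 1 transfers this back to $A$.

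\textbf{Main obstacle.} The only point needing care is Step 1, passing from arbitrary reals to integers without creating or destroying the relevant structure. One must ensure that the images stay nonzero, so that the modular step applies, and that the map is injective on $A$, so that cardinalities are preserved. The choice of a generic rational functional handles both; pulling back sum-freeness needs only linearity. If the set $A$ consists of integers from the start, as in all applications in this paper, Step 1 can be skipped entirely.
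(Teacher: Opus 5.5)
The paper does not prove this theorem; it only cites Erd\H{o}s. Your argument is a correct, self-contained proof along the classical lines.

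The random dilation $b\mapsto xb \bmod p$ with $p\equiv 2 \pmod 3$, landing in the sum-free middle third $C=\{k+1,\ldots,2k+1\}$, is carried out correctly. This includes the case $a=b$, because you checked all of $C+C$. The reduction from reals to integers is also valid: you take a $\mathbb{Q}$-linear functional on the $\mathbb{Q}$-span of $A$ that is nonzero on every element of $A$ and on all differences. Such a functional exists because finitely many proper hyperplanes cannot cover a vector space over $\mathbb{Q}$.

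Two harmless remarks:
\begin{itemize}
\item $p>\max_b|b|$ already suffices for $b\not\equiv 0 \pmod p$; the factor $2$ is unnecessary.
\item For $A=\emptyset$ your strict inequality fails, but the theorem's non-strict bound holds trivially.
\end{itemize}
Otherwise you obtain the slightly stronger $|S|>|A|/3$. As you note, Step~1 can be skipped for the paper's application to sets of positive integers.
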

Sets $S$ as above are called \emph{sum-free}. The above theorem states that every finite set of non-zero real numbers contains a large sum-free subset. The following corollary is an immediate consequence of \Cref{thm:sidonexistence} and \Cref{thm:erdos-sum-free-subset}.
\begin{corollary}
\label{cor:sidon-sum-free-subset}
For every finite set $A$ of positive integers of sufficiently large cardinality, there exists a sum-free Sidon subset $S \subseteq A$ with $\card{S} \ge 0.04 \sqrt{\card{A}}$. More specifically, $S$ satisfies the following two properties:
\begin{align*}
a+b = c+d &\implies \{a,b\} = \{c,d\} \quad \text{for any } a,b,c,d \in S,\\
a+b &\neq c \quad \text{for any } a,b,c \in S.
\end{align*}
\end{corollary}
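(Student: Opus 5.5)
The plan is to apply the two cited results in the order opposite to the naive one: first pass to a large sum-free subset, then extract a Sidon subset from it. The point is that both properties are inherited by subsets. If $a+b\neq c$ for all $a,b,c\in T$, the same holds for every $S\subseteq T$, and likewise the Sidon condition restricts to subsets. Applying Erdős first loses only a factor $3$ \emph{inside} the square root, which costs a factor $\sqrt{3}$ rather than $3$. That leaves room for the constant $0.04$.

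Concretely, let $A$ be a finite set of positive integers. Since positive integers are non-zero reals, \Cref{thm:erdos-sum-free-subset} gives a sum-free subset $T\subseteq A$ with $\card{T}\ge \card{A}/3$. Let $N_0$ be the threshold beyond which \Cref{thm:sidonexistence} applies. If $\card{A}\ge 3N_0$, then $\card{T}\ge N_0$, since the threshold in the corollary may be chosen as $3N_0$. Then \Cref{thm:sidonexistence} applied to $T$ yields a Sidon subset $S\subseteq T$ with
\[
\card{S}\ \ge\ 0.08\sqrt{\card{T}}\ \ge\ 0.08\sqrt{\card{A}/3}\ =\ \tfrac{0.08}{\sqrt{3}}\sqrt{\card{A}}\ \ge\ 0.046\sqrt{\card{A}}\ \ge\ 0.04\sqrt{\card{A}}.
\]

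Finally, $S$ is a Sidon set by construction, which gives the first displayed property. It is sum-free because $S\subseteq T$ and $T$ is sum-free, which gives the second displayed property (for all $a,b,c\in S$, including $a=b$). This matches the statement exactly, with constant $0.04$.

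There is no genuine obstacle here. The only care needed is the order of the two applications, together with checking that "sufficiently large" propagates through the factor $1/3$.
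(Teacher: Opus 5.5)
Your proposal is correct and follows essentially the paper's own proof. The paper also first applies \Cref{thm:erdos-sum-free-subset} to obtain a sum-free $A'\subseteq A$ with $\card{A'}\ge\card{A}/3$, then applies \Cref{thm:sidonexistence} inside $A'$, using that sum-freeness passes to subsets and that $0.08/\sqrt{3}>0.04$.
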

\begin{proof}
First apply \Cref{thm:erdos-sum-free-subset} to obtain a sum-free subset $A'\subseteq A$ of size at least $\card{A}/3$. If $\card{A}$ is sufficiently large, then so is $\card{A'}$, so we can apply \Cref{thm:sidonexistence} to $A'$ and obtain a Sidon subset $S\subseteq A'$ with
\[
\card{S}\ge 0.08\sqrt{\card{A'}}\ge \frac{0.08}{\sqrt 3}\sqrt{\card{A}}>0.04\sqrt{\card{A}}.
\]
Since $S\subseteq A'$, the sum-free property is preserved.
\end{proof}

\section{Hardness of Blown-Up Matching Polynomials.}
\label{sec:complexity-of-blow-up-matching-polynomial}

In this section, we reduce the multiplicity-free perfect matching polynomial to its blown-up version. This will allow us later to handle $m_{\bl}$ for partitions $\bl$ with repeating parts; the repetitions correspond to blow-ups. The main result of this section is the following lemma.

\begin{lemma}
\label{lem:pm-to-blowup}
Let $r \ge 4$ and $\mathbf{w}=(w_1, \ldots, w_r)\in \ZZ_{>0}^r$ be such that $\sum_i w_i$ is even.
For $t \defeq 2\lfloor r/4 \rfloor$, we have
\[\PMP_t \ \le_c \ \PMP_{r}^{(\mathbf w)}.\]
\end{lemma}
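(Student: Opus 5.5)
The plan is a gadget reduction. We use $r\ge 2t$ blocks of $K_r^{(\mathbf w)}$ to simulate the $t$ vertices of $K_t$, and set most edge weights to constants. Marker variables record how many vertices of each block are matched across blocks, and we extract one coefficient in these markers. Since $c$-reductions allow polynomially many oracle calls, we can substitute $x_{a,b}\mapsto c_{a,b}\,y_a y_b\,x'_{a,b}$ with fresh markers $y_a$. Each $y_a$ appears with degree at most $w_a\le \deg \PMP_r^{(\mathbf w)}$, so we can recover any coefficient of a monomial in the $y$'s by evaluating at many points and interpolating (Lagrange). This is the standard way to take a homogeneous or multihomogeneous component with polynomial overhead.

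\textbf{Step 1 (grouping blocks).} Partition $[r]$ into $t$ ``vertex groups'' $V_1,\dots,V_t$, each a set of at most two blocks, plus leftover blocks $L$. We intend every group to contain one distinguished \emph{port} block $P_i$, and possibly a private gadget block $G_i$ whose only nonzero edges go to $P_i$ and to itself (weight a constant or $z$). All inter-group edges are set to $0$ except those between ports, which get weight $x'_{i,j}\,y_iy_j$. The leftover blocks $L$ are wired only among themselves, so that together they contribute a nonzero constant factor, namely their number of perfect matchings. Because $\sum_a w_a$ is even and $t$ is even, the parity bookkeeping can be arranged so that the leftovers have even total size. We then extract the coefficient of $y_1y_2\cdots y_t$.

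\textbf{Step 2 (counting).} In a surviving perfect matching, each port $P_i$ has exactly one vertex matched to another port, and these port-to-port edges form a perfect matching $M$ of $K_t$. The remaining vertices of $V_i$ must be perfectly matched inside $V_i$. By the symmetry of a block, this number $c_i$ does not depend on which port vertex was used. Hence the extracted coefficient equals
\[
\Bigl(\prod_i c_i\Bigr)\cdot C_L\cdot \sum_{M}\prod_{\{i,j\}\in M} w_{P_i}w_{P_j}\,x'_{i,j}.
\]
Each vertex appears exactly once in $M$, so the product of the $w_{P_i}$ over $M$ is $\prod_i w_{P_i}$ regardless of $M$. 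The coefficient is therefore a nonzero constant multiple of $\PMP_t$ evaluated at the $x'_{i,j}$, and dividing by that constant finishes the reduction.

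\textbf{Main obstacle: parity.} For $c_i\neq 0$ we need $|V_i|$ to be \emph{odd}, since one vertex leaves the group. If at least $t$ blocks have odd size, singleton groups suffice. But if, for instance, all $w_a$ are even, no group of blocks can have odd total size, and the single-port design fails. I expect this is exactly why $t=2\lfloor r/4\rfloor$, so that roughly four blocks are available per pair of simulated vertices. In the even case I would first try to encode each vertex of $K_t$ by two blocks, each of which sends out exactly one vertex: an $a$-copy and a $b$-copy, with the $y$-markers forcing one cross vertex in each. We would put $x'_{i,j}$ on the $a$-side edges and tie the $b$-side to the $a$-side, for example by weighting the $b$-edges between groups $i$ and $j$ with a second marker $u_{i,j}$ and extracting the coefficient of $\prod_{\{i,j\}\in M} y$-pattern terms that forces both sides to use the same pairs. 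This would yield a constant times $\PMP_t$ again, possibly after dividing out a squared weight or a $\sqrt{\cdot}$-free normalization. Making this coupling rigorous, while keeping the extraction to polynomially many interpolation points, is the step I expect to need the most care.
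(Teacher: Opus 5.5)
\textbf{The even case is a genuine gap, and the coupling idea does not close it.} Your single-port design needs $t$ disjoint closed groups of odd total size, hence at least $t$ blocks with odd $w_a$. The lemma also covers the case where few or none of the $w_a$ are odd, and there you have no argument.

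Forcing the $a$-side and $b$-side matchings to coincide is a diagonal (Hadamard-type) extraction over exponentially many matchings $M$, not a single coefficient. Coupling is not even the real issue: an independent constant-weight second side would only contribute a constant factor. The actual obstruction is that with even closed groups, nothing forces a group to emit exactly one vertex on each side. Emission patterns $(2,0)$, $(1,1)$, $(0,2)$ leave the same number of internal vertices and hence the same completion count. With a bounded number of markers only totals are visible, and totals do not pin down per-group emissions.

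What is missing is a \emph{cancellation} mechanism. The paper supplies it by first reducing $\PMP_t$ to the non-perfect $\MatchP_r^{(\mathbf w)}$, where parity plays no role. Each original block $A_i$ gets a private block $B_i$, joined to it by edges of weight $\xi_i$, where $\xi_i$ is a root of $M(K_{a_i,b_i},x)$ that is not a root of $M(K_{a_i-1,b_i},x)$. Such a root exists because these roots are simple and nonzero (via Laguerre polynomials). With a \emph{single} marker $z$ on original edges, the coefficient $[z^{t/2}]$ factors over original matchings $O$ with gadget factor $\prod_i M(K_{a_i-\theta_O(i),b_i},\xi_i)$. This factor vanishes whenever some $A_i$ is untouched. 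Afterwards, $\MatchP_r^{(\mathbf w)}\le_c\PMP_r^{(\mathbf w)}$ via $\theta\mapsto\lambda\theta+1$ (neutral edges), and only the global evenness of $\sum_i w_i$ is used.

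\textbf{Your coefficient extraction is not polynomial as claimed.} Lagrange interpolation in $t$ independent markers $y_1,\dots,y_t$ needs $\prod_a(w_a+1)\ge 2^t$ evaluation points. Multihomogeneous components are cheap only for a bounded number of variable groups. In general, extracting a multilinear coefficient from an oracle is as hard as the permanent, since $[y_1\cdots y_n]\prod_i\sum_j x_{ij}y_j=\Per(X)$.

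In your odd case this is repairable, and your Step 2 counting is then correct. Substitute $y_i\mapsto z$ for all $i$ and take $[z^{t}]$. Each closed odd group must emit an odd number of vertices, so a total of $t$ emissions over $t$ groups forces exactly one per group. That parity argument is exactly what disappears when the blocks are even. This is why the paper uses isolation weights with one marker rather than per-block markers.
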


Throughout this section, fix $r \ge 4$ and $t \defeq 2\lfloor r/4 \rfloor$ and $\mathbf w\in \ZZ_{> 0}^r$ as above.
Our reduction proceeds along the following chain, established in \Cref{lem:blowup-perfmatch-match} and \Cref{lem:blowup-match-perfmatch}:
\begin{equation}
\label{eq:blowup-reduction-chain}
\PMP_t
\ \le_c \ \MatchP_{r}^{(\mathbf w)} 
\ \le_c \ \PMP_{r}^{(\mathbf w)}.
\end{equation}

\paragraph{Graph construction.}
Let us write $a_{i}=w_{i}$ and $b_{i}=w_{t+i}$ for $i\in[t]$.
Without loss of generality, we assume that $\mathbf w$ is non-increasing. 
In particular, this implies that $a_{i} \geq b_{i}$ for all $i\in[t]$.
We obtain a weighted gadget $\Gamma_{\mathbf w}$ from $K_r^{(\mathbf w)}$ by making the following substitutions and deleting zero-weight edges:
\begin{itemize}
\item Introduce a new indeterminate $z$ and set $x_{ij} \gets z\cdot x_{ij}$ for all $1\leq i<j \leq t$.
\item Set $x_{i,t+i} \gets  \xi_i$ for $1 \leq i \leq t$, where $\xi_i$ will be determined later. The choice of $\xi_i$ is crucial.
\item Set $x_{i,j}\gets 0$ for all $1 \leq i <j \leq r$ not covered in the previous two steps.
\item Set every intra-block variable $z_i$ of $\MatchP_r^{(\mathbf w)}$ to $0$.
\end{itemize}
Thus $\MPG{\Gamma_{\mathbf w}}$ is a specialization of $\MatchP_r^{(\mathbf w)}$.
Only the first $2t = 4\lfloor r/4 \rfloor$ blocks are relevant; every vertex in the remaining (at most three) blocks is isolated in $\Gamma_{\mathbf w}$.
\Cref{fig:blowup-gadget} shows the case $r=8$ and $t=4$, with the four original blocks having the same sizes and positions as in \Cref{fig:graph-blow-up}.
Note that we \emph{cannot} control the blow-up $\mathbf w$; 
the vector $\mathbf w$ can be viewed as adversarial.  

For $i\in[t]$, let $A_{i}$ be the vertex block in $\Gamma_{\mathbf w}$ corresponding to vertex $i$ of $K_r$,
and let $B_i$ be the block corresponding to vertex $t+i$.
We will call $A_1\cup \ldots \cup A_t$ the ``original'' vertices of $\Gamma_{\mathbf w}$ and $B_1 \cup \ldots \cup B_t$ the ``gadget'' vertices.
Note that each block $A_i$ of original vertices has a private block $B_i$ of gadget vertices; there are no edges between $A_i$ and $B_j$ for $i\neq j$.

\begin{figure}[t]
\centering
\begin{tikzpicture}[paper figure]
\drawexampleblowup{A}{0}
\foreach \i/\x/\y/\a/\b/\angle in
  {1/-3/1.3/4/2/18,2/3/-1.3/3/2/164,3/3/1.3/3/1/0,4/-3/-1.3/2/1/0} {
  \coordinate (B-\i) at (\x,\y);
  \draw[blob, fill=black!8] (B-\i) circle[radius=.48cm];
  \foreach \j in {1,...,\b} {
    \ifnum\b=1
      \coordinate (B-\i-\j) at (B-\i);
    \else
      \coordinate (B-\i-\j) at
        ($(B-\i)+({\angle+360*(\j-1)/\b}:.29)$);
    \fi
  }
  \foreach \u in {1,...,\a}
    \foreach \v in {1,...,\b}
      \draw[between] (A-\i-\u) -- (B-\i-\v);
  \foreach \j in {1,...,\b}
    \node[copy vertex] at (B-\i-\j) {};
}
\drawexamplevertices{A}
\foreach \i in {1,3}
  \foreach \row in {A,B}
    \node[figure label, anchor=base] at ($(\row-\i)+(0,.7)$) {$\row_\i$};
\foreach \i in {2,4}
  \foreach \row in {A,B}
    \node[figure label, anchor=base] at ($(\row-\i)+(0,-.9)$) {$\row_\i$};
\node[figure label, anchor=base] at ($(A-1)!.5!(B-1)+(0,-.7)$) {$\xi_1$};
\node[figure label, anchor=base] at ($(A-1)!.5!(A-3)+(0,.55)$) {$z x_{1,3}$};
\node at (0,-2.65) {$\Gamma_{\mathbf w}$};
\end{tikzpicture}
\caption{The graph $\Gamma_{\mathbf w}$ in \Cref{lem:blowup-perfmatch-match}, obtained from $K_8^{(\mathbf w)}$ with $\mathbf w=(4,3^2,2^3,1^2)$.
Edges between original blocks $A_i,A_j$ have weight $z x_{i,j}$; edges from $A_i$ to its private block $B_i$ have weight $\xi_i$. Zero-weight edges are not shown.
For the labelled pair with $|A_1|=4$ and $|B_1|=2$, one can take $\xi_1=-1/2$:
the total weight of gadget extensions is $1+8\xi_1+12\xi_1^2=0$ when all vertices of $A_1$ are free,
but $1+6\xi_1+6\xi_1^2=-1/2$ when one vertex of $A_1$ is already matched by an original edge.}
\label{fig:blowup-gadget}
\end{figure}

\paragraph{First step.}
The reduction from $\PMP_t$ to $\MatchP_{r}^{(\mathbf w)}$ proceeds as follows:
Every matching $M$ in $\Gamma_{\mathbf w}$ is the union of an ``original matching'' $O$ that lives within the original blocks and, for each $i\in [t]$, some matching between the original block $A_i$ and its gadget block $B_i$.
In other words, each matching $O$ on the original part can be extended in several ways to a matching of the overall graph $\Gamma_{\mathbf w}$ by adding gadget edges.
For each $i \in [t]$, the edge-weight $\xi_i$ on the gadget edges between $A_i$ and $B_i$ is chosen carefully to ensure that the weighted count of extensions vanishes if $O$ leaves all vertices of $A_i$ unmatched. 
The required weights $\xi_i$ are supplied by the following lemma, which is proven in \Cref{sec:apx-isolation-weight} of the appendix.

\begin{lemma}
\label{lem:isolation-weight}
Let $1\le b\le a$ be integers and let $K_{a,b}$ be the complete bipartite graph with $a+b$ vertices. 
Let $m_k$ for $k\in \NN$ count the matchings with $k$ edges in $K_{a,b}$. Defining the matching-generating polynomial
\[
M(K_{a,b},x) \defeq \sum_{k} m_k \, x^k,
\]
there is $\xi=\xi(a,b) \in \mathbb C$ such that
    \begin{align}
    M(K_{a,b},\xi ) & =0, \label{eq:vanish}\\
    M(K_{a-1,b},\xi ) & \neq0 \label{eq:nonvanish}.
    \end{align}
\end{lemma}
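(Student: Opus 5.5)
The plan is to show something stronger than stated: $M(K_{a,b},x)$ and $M(K_{a-1,b},x)$ have no common root at all. Then any complex root $\xi$ of $M(K_{a,b},x)$ works. Such a root exists because $m_k=\binom{a}{k}\binom{b}{k}k!$, so $M(K_{a,b},x)$ has degree exactly $b\ge 1$ (leading coefficient $\binom{a}{b}b!\neq 0$ since $a\ge b$).

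Write $P_{p,q}(x)\defeq M(K_{p,q},x)$ for all integers $p,q\ge 0$. Then $P_{p,q}=P_{q,p}$, and $P_{p,0}=P_{0,q}=1$. Also $P_{p,q}(0)=1$, so $0$ is never a root. The tool is the standard vertex-deletion recurrence. Fix a vertex $u$ on the side of size $p$; a matching either misses $u$, or matches it to one of the $q$ vertices on the other side. This gives, for $p,q\ge 1$,
\[
P_{p,q}=P_{p-1,q}+q\,x\,P_{p-1,q-1}.
\]
Deleting instead a vertex on the side of size $q$ gives
\[
P_{p,q}=P_{p,q-1}+p\,x\,P_{p-1,q-1}.
\]

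The key step is a descent claim. If $p,q\ge 1$ and $\xi$ is a common root of $P_{p,q}$ and $P_{p-1,q}$, then $\xi$ is also a common root of $P_{p-1,q-1}$ and $P_{p-2,q-1}$; this needs $p\ge 2$, and I handle the boundary below. The argument runs as follows.

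First, the first recurrence at $(p,q)$ gives $q\xi P_{p-1,q-1}(\xi)=0$. Since $\xi\neq 0$, this yields $P_{p-1,q-1}(\xi)=0$.

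Next, apply the second recurrence at $(p-1,q)$:
\[
P_{p-1,q}=P_{p-1,q-1}+(p-1)\,x\,P_{p-2,q-1}.
\]
Both $P_{p-1,q}$ and $P_{p-1,q-1}$ vanish at $\xi$. So $(p-1)\xi P_{p-2,q-1}(\xi)=0$, and hence $P_{p-2,q-1}(\xi)=0$ when $p\ge 2$.

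Thus the pair $(p,q)$ is replaced by $(p-1,q-1)$, with the same shape of hypothesis.

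Now start from $(p,q)=(a,b)$ and iterate. Each step lowers both indices by one. We stop as soon as one of the polynomials in the current pair has an index equal to $0$.

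If $q$ reaches $0$, we get $P_{p',0}(\xi)=0$, contradicting $P_{p',0}=1$.

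If $p$ would reach the case $p=1$, then the pair involves $P_{0,q'}=1$, which cannot vanish at $\xi$; again a contradiction.

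Since $a\ge b$, one of these terminations must occur after finitely many steps. Hence no common root exists, and \eqref{eq:vanish} and \eqref{eq:nonvanish} hold for every root $\xi$ of $P_{a,b}$.

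The only delicate point is the bookkeeping at the boundary: checking that the recurrences are applied only with positive indices, and that every termination case lands on a constant polynomial $1$. I expect this to be routine once the two recurrences are written down.
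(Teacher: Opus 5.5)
Your argument is correct, and it takes a genuinely different route from the paper.

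The paper proves the lemma analytically. It identifies $M(K_{a,b},x)=b!\,x^b L_b^{(a-b)}(-x^{-1})$ with a generalized Laguerre polynomial. It then uses orthogonality on $[0,\infty)$ and Szeg\H{o}'s zero theorem to get that $f=M(K_{a,b},x)$ has $b$ simple nonzero roots. Finally it counts: if $g=M(K_{a-1,b},x)$ vanished at all these roots, then $f$ and $g$ (both of degree at most $b$, both equal to $1$ at $0$) would agree on $b+1$ points. They would then coincide, contradicting the different linear coefficients $ab$ and $(a-1)b$.

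You instead use the two vertex-deletion recurrences $P_{p,q}=P_{p-1,q}+q\,x\,P_{p-1,q-1}$ and $P_{p,q}=P_{p,q-1}+p\,x\,P_{p-1,q-1}$. With them you run a Euclidean-algorithm-style descent $(p,q)\mapsto(p-1,q-1)$ on a hypothetical common root. The descent must end at a constant polynomial, $P_{p',0}=1$ or $P_{0,q'}=1$. I checked the boundary bookkeeping: since $p-q=a-b\ge 0$ is invariant, you reach either $q=0$, or $(p,q)=(1,1)$ when $a=b$. At each step you only divide by $q\xi$ or $(p-1)\xi$ with $q\ge 1$, $p\ge 2$ and $\xi\neq 0$, so the argument goes through.

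Your route is fully elementary and does not need simplicity of the roots. It also proves a stronger statement: $f$ and $g$ are coprime, so \emph{every} root of $f$ is a valid isolation weight $\xi(a,b)$. That is convenient if one wants an explicit or uniform choice in the reduction.

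The paper's route only guarantees that \emph{some} root of $f$ works. In exchange, it yields structural information your argument does not: all roots of $f$ are real, negative and simple.
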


\begin{remark}
For nonuniform $c$-reductions, the mere existence of the constants $\xi(a,b)$ is enough. We remark that they are roots of $M(K_{a,b},x)$, so their algebraic degree is at most $b$ over $\mathbb Q$, and the defining polynomial has integer coefficients whose bit lengths are polynomially bounded in $a$ and $b$.  
\end{remark}

The isolation weights let us recover perfect matchings from the blown-up matching polynomial.
The following lemma makes this reduction precise.

\begin{lemma}
\label{lem:blowup-perfmatch-match}
For $r,t,\mathbf{w}$ as in \Cref{lem:pm-to-blowup}, we have $\PMP_t \le_c \MatchP_{r}^{(\mathbf w)}$.
\end{lemma}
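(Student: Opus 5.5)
We work with the gadget $\Gamma_{\mathbf w}$ defined above, with $\xi_i \defeq \xi(a_i,b_i)$ taken from \Cref{lem:isolation-weight}; this is legitimate because $a_i \ge b_i \ge 1$. Since $\MPG{\Gamma_{\mathbf w}}$ is a specialization of $\MatchP_r^{(\mathbf w)}$, a single oracle call evaluates it. We then view $\MPG{\Gamma_{\mathbf w}}$ as a polynomial in $z$ and extract the coefficient of $z^{t/2}$ by interpolation. This uses $O(\deg_z)$ oracle calls at distinct values of $z$, and $\deg_z$ is polynomial in $\sum_i w_i$.

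Next we compute this coefficient combinatorially. Every matching $M$ of $\Gamma_{\mathbf w}$ splits uniquely as $M = O \cup \bigcup_i G_i$. Here $O$ is a matching using only original edges, with weight $z^{|O|}\prod_{e\in O} x_e$. Each $G_i$ is a matching between $A_i\setminus V(O)$ and $B_i$, so it lives in a copy of $K_{a_i-c_i,\,b_i}$, where $c_i \defeq |A_i \cap V(O)|$. Summing over the $G_i$ for fixed $O$ gives
\[
\sum_{M\supseteq O}\wt{M} = \wt{O}\prod_{i=1}^t M\bigl(K_{a_i-c_i,b_i},\xi_i\bigr).
\]
By \eqref{eq:vanish}, every $O$ with some $c_i=0$ contributes zero. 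A term in the $z^{t/2}$ coefficient has $|O| = t/2$, hence covers exactly $t$ original vertices. Since each of the $t$ blocks must contain at least one covered vertex, each block contains exactly one, so $c_i = 1$ for all $i$. Edges inside a block have weight $0$, so every edge of $O$ joins two distinct blocks. Therefore $O$ induces a perfect matching $\pi$ of $K_t$ on the block indices.

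Each such $\pi$ arises from exactly $\prod_i a_i$ choices of $O$, obtained by choosing the covered vertex in each block. Every such $O$ contributes $\prod_{\{i,j\}\in\pi}x_{i,j}\cdot \prod_i M(K_{a_i-1,b_i},\xi_i)$. The coefficient of $z^{t/2}$ is therefore
\[
C\cdot \PMP_t, \qquad C = \prod_{i=1}^t a_i\, M\bigl(K_{a_i-1,b_i},\xi_i\bigr),
\]
and $C \neq 0$ by \eqref{eq:nonvanish}. Dividing by the constant $C$ yields $\PMP_t$ with polynomially many oracle calls.

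The main obstacle is the counting step that shows only $c_i = 1$ survives. One must check that the degree bound $|O| = t/2$ together with the vanishing forced by $c_i = 0$ pins down a single configuration, and that zero intra-block weights prevent edges inside a block. The remaining points are routine bookkeeping: degrees in $z$ are polynomial, the constants are fixed for nonuniform reductions, and blocks $B_i$ and the extra isolated blocks do not interfere.
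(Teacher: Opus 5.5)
Your proposal is correct and follows essentially the same route as the paper. Both use the gadget $\Gamma_{\mathbf w}$ with isolation weights $\xi_i=\xi(a_i,b_i)$, extract the coefficient of $z^{t/2}$, and factor the gadget extensions as $\prod_i M(K_{a_i-c_i,b_i},\xi_i)$. Both then use the pigeonhole argument forcing $c_i=1$ for every block, and normalize by the nonzero constant $\prod_i a_i\, M(K_{a_i-1,b_i},\xi_i)$.
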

\begin{proof}
Use the graph $\Gamma_{\mathbf w}$ and blocks $A_i,B_i$ defined above. Choose $\xi_i\defeq\xi(a_i,b_i)$ by \Cref{lem:isolation-weight}, and put
\[
\eta_i \defeq M(K_{a_i-1,b_i},\xi_i) \neq 0.
\]
We obtain the required $c$-reduction by proving that
\begin{equation}
\label{eq:pm-main-reduction}
\PMP_t = \frac{[z^{t/2}] \MPG{\Gamma_{\mathbf w}}}{\prod_{i=1}^t a_i \eta_i}.
\end{equation}
To prove \Cref{eq:pm-main-reduction}, let $\mathcal{O}$ be the set of matchings with exactly $t/2$ edges in the graph $\Gamma_{\mathbf w}[A_1 \cup \ldots \cup A_t]$ induced by original blocks. We call these the \emph{original matchings}.
For every original matching $O \in \mathcal O$, define its \emph{type} $\theta_O$ to be the map with domain $[t]$ that maps $i \mapsto |V(O)\cap A_i|$.
An original matching $O \in \mathcal O$ is a \emph{lift} if $\theta_O(i)=1$ for all $i \in [t]$.
For an original edge $e$ between blocks $A_i$ and $A_j$, let $\bar w(e)\defeq x_{ij}$ be its weight after removing the auxiliary factor $z$.

Each lift corresponds to a perfect matching in $K_t$ by consolidating each block $A_i$ of $\Gamma_{\mathbf w}$ to a single vertex $i$ in $K_t$. Conversely, each perfect matching in $K_t$ corresponds to $\prod_{i=1}^t a_i$ lifts in $\Gamma_{\mathbf w}$ by independently choosing, for each block $i$, one of $a_i$ vertices in block $A_i$. The correspondence preserves the weights $\bar w(e)$ of the original edges, so the weighted count of lifts equals that of perfect matchings in $K_t$, up to the factor $\prod_{i=1}^t a_i$.

In view of this, we aim at restricting the matchings counted in $[z^{t/2}] \MPG{\Gamma_{\mathbf w}}$ to lifts.
For $O \in \mathcal O$, let $\mathcal E_O$ be the set of matchings on gadget edges that extend $O$ to a matching $M$ in $\Gamma_{\mathbf w}$.
Every matching of $z$-degree $t/2$ has a unique decomposition $O\cup F$, where $O\in\mathcal O$ and $F\in\mathcal E_O$. Hence
\begin{equation}
\label{eq:match-q}
[z^{t/2}] \MPG{\Gamma_{\mathbf w}} = \sum_{O \in \mathcal O} \prod_{e\in O} \bar w(e) \underbrace{\sum_{F\in\mathcal E_O} \prod_{e\in F} w(e)}_{=q(O)}.
\end{equation}
We show that the gadget extensions $q(O)$ restrict the summation to lifts $O$. 
To abbreviate, let us write $\theta \defeq \theta_O$ and write $Q_i^\theta$ for the complete bipartite graph with $a_i-\theta(i)$ left and $b_i$ right vertices.
We observe that
\begin{equation}
\label{eq:qO-prod}
q(O) 
= \prod_{i=1}^t M(Q_i^{\theta} ,\xi_i)
=\begin{cases}
0 & O \text{ is not a lift,}\\
\prod_{i=1}^t \eta_i & O \text{ is a lift.}\\
\end{cases}
\end{equation}
For the first equation, note that the sum defining $q(O)$ in \Cref{eq:match-q} ranges over all matchings in the disjoint union of complete bipartite graphs $Q_1^{\theta} \cup \ldots \cup Q_t^{\theta}$, where $Q_i^{\theta}$ has weight $\xi_i$ on all edges.
For the second equation:
\begin{itemize}
    \item If $O$ is not a lift, we show that $q(O)=0$. Indeed, if $O$ is not a lift, then $\theta(i)=0$ for some $i\in [t]$, because $O$ is a matching with $t/2$ edges by definition, so an index $j$ with $\theta(j)>1$ forces some index $i$ with $\theta(i)=0$. Then $Q_i^{\theta} \cong K_{a_i,b_i}$ and our choice of $\xi_i$ by \Cref{lem:isolation-weight} ensures that the $i$-th factor in \Cref{eq:qO-prod} is $M(Q_i^{\theta},\xi_i)=0$, which overall implies $q(O)=0$.
    \item If $O$ is a lift, then $Q_i^{\theta} \cong K_{a_i-1,b_i}$ for all $i$, so the factors in \Cref{eq:qO-prod} for $i \in [t]$ are $M(Q_i^{\theta},\xi_i) = \eta_i \neq 0$.
\end{itemize}
Substituting \Cref{eq:qO-prod} into \Cref{eq:match-q}, we obtain
\begin{equation}
[z^{t/2}] \MPG{\Gamma_{\mathbf w}}
\ = \  \left(\prod_{i=1}^t \eta_i\right) \sum_{O \in \mathcal O \text{ lift }} \prod_{e\in O} \bar w(e)
\ = \ \left(\prod_{i=1}^t a_i\eta_i \right) \PMP_t,
\end{equation}
where the second equality follows from an earlier observation: every perfect matching in $K_t$ corresponds to $\prod_i a_i$ lifts $O \in \mathcal O$ of the same $\bar w$-weight. This proves \Cref{eq:pm-main-reduction} and concludes the proof.
\end{proof}

\paragraph{Second step.}
We reduce the blown-up matching polynomial to the blown-up perfect matching polynomial.
\begin{lemma}
\label{lem:blowup-match-perfmatch}
For $r,\mathbf{w}$ as in \Cref{lem:pm-to-blowup}, we have $\MatchP_r^{(\mathbf w)} \le_c \PMP_r^{(\mathbf w)}$.
\end{lemma}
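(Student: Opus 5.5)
The plan is to simulate the unmatched vertices of a (not necessarily perfect) matching by a uniform shift of all edge weights. Let $N\defeq\sum_i w_i$, which is even by assumption. The point is that the underlying unweighted graph of $K_r^{(\mathbf w)}$ is the complete graph $K_N$.

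\textbf{Step 1 (the identity).} Introduce a fresh indeterminate $s$ and substitute $x_{i,j}\gets s\,x_{i,j}+1$ for all $i<j$ and $z_i\gets s\,z_i+1$ for all $i$. This substitution respects the block structure: all edges between blocks $i$ and $j$ still share one weight, and all edges inside block $i$ still share one weight. So the result is a legitimate specialization of $\PMP_r^{(\mathbf w)}$, obtainable by one oracle call.

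Expand the product over the edges of a perfect matching $P$ of $K_N$. Each edge contributes either its scaled original weight $s\,\wt{e}$ or the constant $1$. The edges that contribute their original weight form a matching $M$ of $K_r^{(\mathbf w)}$, and the remaining edges form a perfect matching of the complete graph on the $N-2|M|$ uncovered vertices. Conversely, every pair $(M,P')$ of a matching $M$ and a perfect matching $P'$ of the complement arises from exactly one $P$. Since $K_N$ is complete, the number of such $P'$ is $(N-2|M|-1)!!$, independently of which vertices $M$ covers. Writing $\MatchP^{(\mathbf w)}_{r,k}$ for the sum over matchings with exactly $k$ edges, we therefore get
\[
\PMP_r^{(\mathbf w)}\big|_{x\gets sx+1,\ z\gets sz+1}
=\sum_{k=0}^{N/2} s^k\,(N-2k-1)!!\;\MatchP^{(\mathbf w)}_{r,k}.
\]
Here $(-1)!!=1$, which uses that $N$ is even.

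\textbf{Step 2 (interpolation).} The left-hand side is a polynomial in $s$ of degree at most $N/2$. Evaluate it at $N/2+1$ distinct constants $s=\alpha_0,\dots,\alpha_{N/2}$, using one oracle call each. Lagrange interpolation then recovers every coefficient $(N-2k-1)!!\,\MatchP^{(\mathbf w)}_{r,k}$ as a fixed linear combination of these oracle values. Dividing by the nonzero integer $(N-2k-1)!!$ (we are in characteristic zero) and summing over $k$ yields $\MatchP_r^{(\mathbf w)}$.

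\textbf{Complexity.} The construction uses $O(N)$ oracle calls plus a circuit of size $\poly(N)$ for the substitutions and the interpolation. Since $N$ is polynomially bounded in the $p$-family setting, this is a $c$-reduction.

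\textbf{Main obstacle.} I expect the only delicate point to be the claim in Step 1 that the number of completions depends only on $|M|$. This holds precisely because the blow-up graph is complete as an unweighted graph and the shift by $1$ is applied uniformly to every edge, both inter-block and intra-block. Without the intra-block edges, the count of completions would depend on how $M$ meets the individual blocks, and the clean factorization would fail.
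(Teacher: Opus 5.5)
Your proposal is correct and is essentially the paper's proof. The paper also substitutes $\theta\gets\lambda\theta+1$ for every inter-block and intra-block variable, uses that each $k$-matching extends to exactly $(n-2k-1)!!$ perfect matchings via the weight-$1$ edges, and recovers $\MatchP_r^{(\mathbf w)}$ by interpolating in $\lambda$ and summing $[\lambda^k]q/(n-2k-1)!!$ over $k$.
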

\begin{proof}
Put $n\defeq\sum_{i=1}^r w_i$. Let $\lambda$ be a new variable and let $q$ be the polynomial obtained from $\PMP_r^{(\mathbf w)}$ by substituting $\theta \gets \lambda \theta +1$ for every edge variable $\theta$ of the blow-up graph, including the inter-block variables $x_{ij}$ and the intra-block variables $z_i$.
To prove the lemma, we establish the identity
\begin{equation}
\label{eq:match-from-perfmatch}
\MatchP_r^{(\mathbf w)}
= 
\sum_{k=0}^{n/2}
\frac{[\lambda^{k}] q}{(n-2k-1)!!},
\end{equation}
where, for even $m$, $(m-1)!!=(m-1)(m-3)\cdots 1$ is the number of perfect matchings in $K_m$, with the convention $(-1)!!=1$.
The reduction follows by summing over $k$ and performing interpolation on $\lambda$, similar to \Cref{lem:homogeneous-component-complexity}.

To prove \Cref{eq:match-from-perfmatch}, let $Q_r^{(\mathbf w)}$ be the multigraph with, for each pair of vertices $u,v$, one ``original'' edge of weight $\lambda\theta_{uv}$ and one ``neutral'' edge of weight $1$, where $\theta_{uv}$ is the corresponding inter-block or intra-block variable.

Every $k$-matching $M$ in $K_r^{(\mathbf w)}$ can be extended to exactly $(n-2k-1)!!$ perfect matchings in $Q_r^{(\mathbf w)}$ by adding a perfect matching on neutral edges on the $n-2k$ vertices not matched by $M$.
Each such extension has the same weight as $M$, up to a factor $\lambda^k$.
The coefficient of $\lambda^k$ in $q$ sums over all such extensions of $k$-matchings in $K_r^{(\mathbf w)}$.
\end{proof}

We have collected both steps required to prove \Cref{lem:pm-to-blowup}.

\begin{proof}[Proof of \Cref{lem:pm-to-blowup}]
Combine \Cref{lem:blowup-perfmatch-match} and \Cref{lem:blowup-match-perfmatch} with the transitivity of $c$-reductions via \cref{eq:blowup-reduction-chain}.
\end{proof}

\section{Blown-Up Matching Polynomial to Monomial Symmetric Polynomials.} \label{sec:blow-up-matching-polynomial-to-monomial-symmetric-polynomials}

In this section, we extend ideas from \cite{curticapeanmonomialmymmetricpolynomials2022} to prove $\VNP$-completeness of a large class of monomial symmetric polynomial families under $c$-reductions:
if the defining partitions contain sufficiently many distinct parts and the number of variables is at least the degree, then the corresponding family of monomial symmetric polynomials is $\VNP$-complete under $c$-reductions.

For a partition $\bl$ with degree $d(\bl)$, define $m_{\bl}^{\pow}$ to be the unique polynomial in the power-sum variables $y_1,\ldots,y_{d(\bl)}$ such that, when the number of variables is at least $d(\bl)$,
\[
m_{\bl}(x_1,\ldots,x_v)=m_{\bl}^{\pow}(p_1(x),\ldots,p_{d(\bl)}(x)).
\]

We now state the key lemma underlying our main result.

\begin{lemma}\label{lem:reduction-to-blow-up-matchings-to-mbl}
There is an absolute constant $\alpha>0$ such that the following holds for all sufficiently large $t$.
Suppose $\bl=(\lambda_1,\lambda_2,\ldots,\lambda_k)$ is a partition with $D(\bl)=t$. Then there exist $r>\alpha\sqrt{t}$ and a vector $\mathbf{w}\in\ZZ_{>0}^r$ with even $\norm{\mathbf{w}}_1$ such that
\[
\PMP_r^{(\mathbf{w})} \le_c m_{\bl}(x_1,\ldots,x_v)
\]
for all $v \ge d(\bl)$.
\end{lemma}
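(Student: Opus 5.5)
The plan is to pass from $m_{\bl}$ to its power-sum witness $m_{\bl}^{\pow}$, and then to specialize $m_{\bl}^{\pow}$ so that only pairings of the copies of a well-chosen set of parts survive. Those pairings are exactly the perfect matchings of a blow-up graph.

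\emph{Step 1: choosing the parts.} Write $\bl=(\mu_1^{w_1},\ldots,\mu_t^{w_t})$. Apply \Cref{cor:sidon-sum-free-subset} to $\{\mu_1,\ldots,\mu_t\}$ to get a sum-free Sidon subset $S$ with $|S|\ge 0.04\sqrt t$. Let $W\defeq\sum_{\mu_j\in S}w_j$. If $W$ is odd, some $\mu_j\in S$ has odd multiplicity, and we delete it from $S$; subsets of sum-free Sidon sets stay sum-free Sidon. Set $r\defeq|S|$, so $r\ge 0.04\sqrt t-1>\alpha\sqrt t$ for $\alpha=0.03$ and large $t$. Let $\mathbf w$ be the vector of multiplicities of the elements of $S$; then $\norm{\mathbf w}_1=W$ is even.

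\emph{Step 2: the reduction $m_{\bl}^{\pow}\le_c m_{\bl}$.} Since $v\ge d(\bl)$, the power sums $p_1,\ldots,p_{d(\bl)}$ are algebraically independent. The results of \cite{blasercomplexitysymmetricpolynomials2019,chaugule2023schurpolynomialsdonothavesmallformulasifdeterminantdoesnt}, as used in \cite{curticapeanmonomialmymmetricpolynomials2022}, then give $m_{\bl}^{\pow}\le_c m_{\bl}(x_1,\ldots,x_v)$. I would quote this step and not reprove it. By \Cref{cor:binet-minc}, we have $m_{\bl}^{\pow}=\kappa(\bl)^{-1}\sum_{\mathcal I\in\mathcal P_k}(-1)^{k-|\mathcal I|}\prod_{I\in\mathcal I}(|I|-1)!\,y_{\lambda_I}$.

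\emph{Step 3: specialization and counting.} Introduce a marker $s$ and substitute as follows:
\begin{itemize}
\item $y_\mu\gets 0$ for every $\mu\in S$, which kills all partitions $\mathcal I$ having an $S$-labelled singleton;
\item $y_{\mu_i+\mu_j}\gets s\,x_{ij}$ for $\mu_i\ne\mu_j$ in $S$, and $y_{2\mu_i}\gets s\,z_i$;
\item $y_\ell\gets s$ for every other $\ell$.
\end{itemize}
The Sidon property makes the labels $\mu_i+\mu_j$ and $2\mu_i$ pairwise distinct. Sum-freeness ensures none of them is an element of $S$, so these substitutions do not conflict with the zeroed variables.

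Next extract the coefficient of $s^{k-W/2}$, that is, partitions with exactly $k-W/2$ blocks, by interpolation (\Cref{lem:homogeneous-component-complexity}). The counting argument runs as follows. Let $o$ be the number of non-$S$ indices lying in blocks that contain an $S$-copy. Every $S$-copy lies in a block of size at least $2$. Hence
\[
|\mathcal I|\le (k-W-o)+\tfrac{W+o}{2}=k-\tfrac W2-\tfrac o2,
\]
with equality if and only if $o=0$, all non-$S$ indices are singletons, and the $S$-copies are paired among themselves. Each surviving $\mathcal I$ thus consists of a fixed set of singletons together with a pairing of the $W$ copies. The pairings correspond bijectively to perfect matchings of $K_r^{(\mathbf w)}$, since the copies are distinct indices in $[k]$. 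The coefficient is the constant $(-1)^{W/2}/\kappa(\bl)$ times a fixed monomial $\pi$ times $\PMP_r^{(\mathbf w)}$.

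\emph{Main obstacle.} The delicate point is the monomial $\pi$. A non-$S$ part $\lambda_i$ may coincide with some pair sum $\mu_a+\mu_b$, in which case its forced singleton contributes a factor $x_{ab}$. The remaining singletons contribute only powers of $s$, which are already accounted for. So $\pi$ is a known monomial in the $x_{ij},z_i$, independent of the matching. I would handle it by computing $\pi\cdot\PMP_r^{(\mathbf w)}$ with the oracle and recovering the quotient. One option is Strassen's division elimination, since the quotient is a polynomial of polynomially bounded degree. A cleaner option is to shift variables $x_{ij}\gets x_{ij}+1$, so that $\pi$ becomes invertible in the power-series sense, and then to truncate. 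If this is troublesome, the fallback is to additionally remove from $S$ every element whose pair sums hit non-$S$ parts. That costs too much in general, so the division approach is the one I would pursue. Composing Steps 2 and 3 with transitivity of $c$-reductions gives the lemma.
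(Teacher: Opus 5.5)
Your proposal is correct and follows the paper's proof essentially step for step. The shared steps are: a sum-free Sidon subset with the parity fix, the reduction $m_{\bl}^{\pow}\le_c m_{\bl}$, extraction of the part with exactly $k-W/2$ blocks after setting $y_\mu=0$ for $\mu\in S$, the same counting argument that forces all non-$S$ positions to be singletons and the $S$-copies to be paired, and division by the forced-singleton monomial via the shift-and-truncate argument of \Cref{lem:divide-by-monomial}. The only cosmetic difference is that the paper divides by $\prod_{\mu_j\notin S}y_{\mu_j}^{w_j}$ in the power-sum variables before renaming them to $x_{ij},z_i$, whereas you divide by its image $\pi$ after substituting with the marker $s$, which is equivalent.
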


\begin{figure}[t]
\centering
\begin{tikzpicture}[paper figure]
\foreach \x/\mu/\w in {0/18/4,1/14/3,2/12/3,3/11/2}
  \foreach \j in {1,...,\w}
    \node[part token] at ({\x+.11*(\w-\j)},{.65+.10*(\w-\j)})
      {\ifnum\j=\w\relax$\mu$\fi};
\foreach \x/\i in {0/1,1/2,2/3,3/4}
  \node[inner sep=1pt] at (\x,.1) {$\mu_\i$};
\foreach \x/\mu/\w in {4.5/5/2,5.5/3/1}
  \foreach \j in {1,...,\w}
    \node[part token, fill=black!8] at ({\x+.11*(\w-\j)},{.65+.10*(\w-\j)})
      {\ifnum\j=\w\relax$\mu$\fi};
\foreach \x/\i in {4.5/5,5.5/6}
  \node[inner sep=1pt] at (\x,.1) {$\mu_\i$};
\draw[decorate, decoration={brace, mirror, amplitude=4pt}, line width=.5pt]
  (-.3,-.25) -- (3.45,-.25);
\node[anchor=north, align=center, inner sep=1pt] at (1.575,-.55)
  {parts from $S$ \\ with repetitions};
\draw[decorate, decoration={brace, mirror, amplitude=4pt}, line width=.5pt]
  (4.2,-.25) -- (5.85,-.25);
\node[anchor=north, align=center, inner sep=1pt] at (5.025,-.55)
  {parts not in $S$};
\node[anchor=base] at (2.775,-2.3) {$m_{\bl}$ for $\bl=(18^4,14^3,12^3,11^2,5^2,3)$};
\node[inner sep=1pt] at (8.8,1.5) {extract degree $9$ from $m_{\bl}^{\pow}$};
\node[inner sep=1pt] at (8.8,1) {set $y_\mu=0$ for $\mu\in S$};
\draw[-{Stealth[length=2.5mm]}, thick] (6.75,.25) -- (10.85,.25);
\node[inner sep=1pt] at (8.8,-.55) {divide by $y_5^2\,y_3$};
\begin{scope}[xshift=13.5cm]
  \drawexampleblowup{H}{1}
  \node[anchor=base] at (0,-2.3) {$\mathrm{PerfMatch}(K_4^{(4,3,3,2)})$};
\end{scope}
\end{tikzpicture}
\caption{The extraction in \Cref{lem:reduction-to-blow-up-matchings-to-mbl} for a concrete
$\bl$ and $S=\{18,14,12,11\}$.
Extracting degree $9$ from $m_{\bl}^{\pow}$ and setting $y_\mu=0$ for $\mu\in S$ forces the three outside positions to be singletons and the selected copies to be paired; then dividing by $y_5^2y_3$ removes the forced singleton factors.
Pairs of selected parts are uniquely decodable from their sums.
The resulting polynomial is a nonzero scalar multiple of $\mathrm{PerfMatch}(K_4^{(4,3,3,2)})$, where blowup sizes correspond to repetition counts.}
\label{fig:partition-lemma}
\end{figure}

\begin{proof}
Write the partition $\bl$ as
$
\bl = (\mu_1^{w_1}, \mu_2^{w_2}, \ldots, \mu_t^{w_t}),
$
where $\mu_1 > \mu_2 > \cdots > \mu_t > 0$ are the distinct parts of $\bl$ and $w_i$ denotes the multiplicity of $\mu_i$ for each $i \in [t]$.
By \Cref{cor:sidon-sum-free-subset}, there exists a sum-free Sidon subset
$
S_0 \subseteq \{\mu_1, \mu_2, \ldots, \mu_t\}
$
of size at least $0.04\sqrt{t}$. If necessary, remove one element from $S_0$ so that the total multiplicity of the remaining set is even, and call the resulting set $S$. For all sufficiently large $t$, this leaves $r\defeq |S|> \alpha\sqrt{t}$ for a fixed constant $\alpha>0$. Suppose we have
\[
S= \{\mu_{i_1}, \mu_{i_2}, \ldots, \mu_{i_r}\} \text{ for some } 1 \le i_1 < i_2 < \cdots < i_r \le t. 
\]
$
W \defeq w_{i_1}+\cdots+w_{i_r}
$
is even. By \Cref{lem:pow-reduction}, we have
\begin{equation}
\label{eq:pow-reduction-complexity}
L^{m_{\bl}(x_1,\ldots,x_v)}(m_{\bl}^{\pow})
\le
\tilde{O}(v^4 (d(\bl))^3).
\end{equation}
For brevity, let $\mathbf{y} = \{y_1, y_2, \ldots, y_{d(\bl)}\}$. By definition of $m_{\bl}^{\pow}$ and \Cref{cor:binet-minc}, we have:
\[
m_{\bl}^{\pow}(\mathbf{y})
=
\frac{1}{\kappa(\bl)}
\sum_{\mathcal{I} \in \mathcal{P}_k}
(-1)^{k-|\mathcal{I}|}
\prod_{I \in \mathcal{I}} (|I|-1)!
\, y_{\lambda_I},
\]
where $\lambda_I = \sum_{i \in I} \lambda_i$ for any nonempty $I \subseteq [k]$. We consider the homogeneous component of degree $k - W/2$ in $m_{\bl}^{\pow}$ and we substitute $y_{\mu_{i_1}} = y_{\mu_{i_2}} = \cdots = y_{\mu_{i_r}} = 0$ in this homogeneous component to obtain a new polynomial $g(\mathbf{y})$. Note that since $S$ is a sum-free Sidon set, this substitution cannot make any variable of the form $y_{\mu_{i_j} + \mu_{i_\ell}}$ vanish in $g(\mathbf{y})$. By \Cref{lem:homogeneous-component-complexity} we obtain:
\begin{equation}
\label{eq:complexity-upper-bound-for-g}
L^{m_{\bl}^{\pow}}\paren{g(\mathbf{y})}
\le
O(d(\bl)^2).
\end{equation}
Each summand in the above expansion of $m_{\bl}^{\pow}$ corresponds to a partition $\mathcal{I}$ of the labelled positions $[k]$, and the degree of this summand is exactly $\card{\mathcal{I}}$. Thus a partition $\mathcal{I}$ contributes to $g(\mathbf{y})$ only if $|\mathcal{I}| = k - W/2$. Such a partition must contain at least $k - W$ singleton parts. Since we set
\[
y_{\mu_{i_1}} = y_{\mu_{i_2}} = \cdots = y_{\mu_{i_r}} = 0,
\]
singletons corresponding to $\mu_{i_1},\ldots,\mu_{i_r}$ do not contribute. Hence every contributing partition $\mathcal{I}$ must contain exactly $k - W$ singleton parts corresponding to $\mu_j$ with $j\in [t]\setminus \{i_1, i_2,\ldots, i_r\}$.

After these $k-W$ forced singletons are fixed, the remaining $W$ elements are arranged into $W/2$ non-singleton parts.
Since every non-singleton part has size at least two, all of these parts have size exactly two.
Thus every remaining part pairs two elements from $S = \{\mu_{i_1},\ldots,\mu_{i_r}\}$, with $\mu_{i_j}$ appearing exactly $w_{i_j}$ times across these pairs.
Because $S = \{\mu_{i_1},\ldots,\mu_{i_r}\}$ is a sum-free Sidon set, all sums
\[
\mu_{i_j} + \mu_{i_\ell} \quad (1 \le j \le \ell \le r)
\]
are distinct. Moreover, since $S$ is sum-free, none of these sums is equal to any element of $S$, so the substitution $y_{\mu_{i_1}}=\cdots=y_{\mu_{i_r}}=0$ does not delete a pair variable.
It might still happen that $\mu_{i_j} + \mu_{i_\ell} = \mu_q$ for some $q \in [t]\setminus \{i_1, i_2, \ldots, i_r\}$. But all copies of $\mu_q$ outside $S$ occur as forced singletons, hence
\[
\prod_{j\in [t]\setminus \{i_1, i_2, \ldots, i_r\}} y_{\mu_j}^{w_j}
\quad\text{divides}\quad
g(\mathbf{y}).
\]
Hence, using \Cref{lem:divide-by-monomial}, we can divide $g(\mathbf{y})$ by this monomial to obtain a new polynomial $h(\mathbf{y})$ such that:
\begin{equation}\label{eq:bound-after-monomial-division}
  L^{m_{\bl}^{\pow}}(h(\mathbf{y})) \leq \poly(d(\bl)).
\end{equation}
Now, in $h(\mathbf{y})$, we perform the following substitutions for variables that occur:
\begin{itemize}
  \item Each variable $y_{\mu_{i_j}+\mu_{i_k}}$ is substituted by a variable $x_{j,k}$ for $1 \le j < k \le r$. 
  \item Each variable $y_{2\mu_{i_j}}$ is substituted by a new variable $z_j$ for all $j \in [r]$.
\end{itemize}
This yields a polynomial $H$ in the variables $x_{i,j}$ for $1 \le i < j \le r$ and $z_j$ for $j \in [r]$. From \Cref{eq:bound-after-monomial-division} it follows immediately that
\[
L^{m_{\bl}^{\pow}}(H)
\le
\poly(d(\bl)).
\]

Combining \Cref{eq:pow-reduction-complexity}, \Cref{eq:bound-after-monomial-division}, and \Cref{lemma:c-reduction-composition}, we obtain
\[
L^{m_{\bl}(x_1,\ldots,x_v)}(H)
\le
\poly(v,d(\bl)).
\]

Finally, observe that each surviving summand defining $H$ corresponds to a partition $\mathcal{I}$ of the labelled positions $[k]$ with exactly $W/2$ parts of size two arising from elements in $S$ with $\mu_{i_j}$ appearing $w_{i_j}$ many times. Hence each such partition corresponds to a perfect matching of the blow-up graph $K_r^{(\mathbf{w})}$, and vice versa, by identifying the labelled positions carrying $\mu_{i_j}$ with the vertices of block $j$. This correspondence preserves monomial weights, and all surviving summands have the same nonzero coefficient because their partitions have the same number of parts, all of size one or two. See \Cref{fig:partition-lemma} for a visualization. Therefore the polynomial $H$ is a nonzero scalar multiple of $\PMP_r^{(\mathbf{w})}$, where $\mathbf{w} = (w_{i_1}, w_{i_2}, \ldots, w_{i_r})$. Since we work over $\CC$, this scalar can be inverted in the reduction. Hence $L^{m_{\bl}(x_1,\ldots,x_v)}(\PMP_r^{(\mathbf{w})})$ is polynomially bounded in $v$ and $d(\bl)$, which proves the lemma.

\end{proof}

To prove our main result claimed in \Cref{thm:main-vnp-hardness}, we use the results proved in \Cref{sec:complexity-of-blow-up-matching-polynomial}.

\begin{proof}[Proof of \Cref{thm:main-vnp-hardness}]
Let $(f_n)_{n \in \NN}$ be a $p$-family of monomial symmetric polynomials. Write
\[
f_n = m_{\bl_n}(x_1, \ldots, x_{v_n})
\]
for some partition $\bl_n$ and positive integers $v_n$ with $v_n \ge d(\bl_n)$ for all $n \in \NN$.
By assumption, $D(\bl_n) \ge n^\epsilon$ for some constant $\epsilon > 0$.
The family $(f_n)$ is in $\VNP$. Writing $\bl_n=(\mu_1^{w_1},\ldots,\mu_t^{w_t})$, consider the product
\[
\prod_{i=1}^{v_n}\left(1+\sum_{j=1}^t y_j x_i^{\mu_j}\right)
\]
which is a $\VP$ family (and hence a $\VNP$ family), since $v_n$, $t$, and $d(\bl_n)$ are polynomially bounded for a $p$-family. The coefficient of $y_1^{w_1}\cdots y_t^{w_t}$ in this product is exactly $f_n$. By the standard closure of $\VNP$ under taking coefficients of monomials in polynomially many auxiliary variables \cite[Theorem~2.19(3)]{DBLP:books/daglib/0025071}, $(f_n)$ is a $\VNP$ family.

By \Cref{lem:reduction-to-blow-up-matchings-to-mbl}, there exist constants $\alpha > 0$ and $n_0$, and, for each $n\ge n_0$, parameters $r_n > \alpha\sqrt{D(\bl_n)}$ and a vector $\mathbf{w}_n \in \ZZ_{>0}^{r_n}$ with even total sum such that
\[
\PMP_{r_n}^{(\mathbf{w}_n)} \le_c f_n.
\]
Since $D(\bl_n) \ge n^\epsilon$, we have $r_n > \alpha n^{\epsilon/2}$. By \Cref{lem:pm-to-blowup}, setting $s_n = 2\lfloor r_n/4 \rfloor$, we obtain
\[
\PMP_{s_n} \le_c \PMP_{r_n}^{(\mathbf{w}_n)}.
\]
Composing the two reductions via \Cref{lemma:c-reduction-composition} yields
\[
\PMP_{s_n} \le_c f_n.
\]
Thus $s_n \ge c n^{\epsilon/2}$ for some constant $c>0$ and all sufficiently large $n$.
To reduce an arbitrary even-size perfect matching polynomial $\PMP_m$ to the family $(f_n)$, choose an index $N=\max\{n_0,\lceil (m/c)^{2/\epsilon}\rceil\}$.
Then $s_N\ge m$, and $\PMP_m$ is a projection of $\PMP_{s_N}$ by fixing the extra $s_N-m$ vertices to a predetermined perfect matching of weight $1$ and setting all other incident extra edges to $0$.
Since $N$ is polynomially bounded in $m$, the perfect matching family $c$-reduces to $(f_n)$.
The perfect matching polynomial family is $\VNP$-complete under $p$-projections, hence $(f_n)_{n \in \NN}$ is $\VNP$-hard under $c$-reductions. Together with membership in $\VNP$, this proves $\VNP$-completeness.
\end{proof}

\section{Conclusion and Open Problems.}\label{sec:conclusion-and-open-problems}
We have established an almost-complete dichotomy for the algebraic complexity of monomial symmetric polynomials in terms of the number of distinct parts in the partition $\bl$. In our proofs, we introduced the blow-up matching polynomial and showed its hardness under arbitrary blow-ups with an even total number of vertices; this may be of independent interest toward robust hardness proofs of perturbed versions of $\VNP$-hard problems.
Our results suggest several directions for future work.
\paragraph{Removal of variable count restriction:} 
All our lower bounds and $\VNP$-hardness results require that the number of variables in the monomial symmetric polynomial is at least its degree. Can we show a lower bound for the case when the number of variables is smaller than the degree? This requirement arises because we rely on the algebraic independence of power symmetric polynomials via \Cref{thm:power-basis}, which only holds when $n \geq d(\bl)$. Removing this constraint would significantly strengthen our results and extend the dichotomy to a broader class of monomial symmetric polynomials.
For a concrete example, consider the Vandermonde permanent
\[
\Per\paren{(x_i^j)_{i,j\in[n]}}
=
m_{(n,n-1,\ldots,1)}(x_1,\ldots,x_n).
\]
This family is not covered by our results, since it has only $n$ variables but degree $\binom{n+1}{2}$. It is a natural open problem whether the Vandermonde permanent family is $\VNP$-complete. 
  
\paragraph{Fine-grained complexity in the middle regime:} What is the complexity of the monomial symmetric polynomial family $m_{\bl}$ when the number of distinct parts in $\bl$ is $\omega(1)$ but lies outside the range covered by our hardness theorem, e.g., $\poly(\log n)$? Can we show a lower bound for this case? Treating the number of distinct parts as a parameter might be useful; parameterized complexity was introduced in algebraic complexity by \cite{parameterizedvaliantclasses2019}.

For immanants, Curticapean \cite{DBLP:conf/stoc/Curticapean21} has shown such a fine-grained result: If a certain parameter is bounded by a constant, then the corresponding immanant family is in $\VP$, whereas unbounded growth places it outside $\VP$, assuming $\mathsf{VFPT} \neq \mathsf{VW}[1]$.   
In a similar spirit, can we for instance show that $m_{\bl}$ is fixed parameter tractable when parameterized by the number of distinct parts in $\bl$, without requiring any restriction on the number of variables? This would provide a more nuanced understanding of the complexity landscape beyond the dichotomy.

\bibliography{ref}
\bibliographystyle{alphaurl}

\appendix
\crefalias{section}{appendix}

\section{Background.}\label{sec:background}

\subsection{Symmetric Polynomials.}
\label{sec:apx-symm-poly}

\begin{theorem}[Fundamental Theorem of Symmetric Polynomials, see {\cite{macdonald1995}}]\label{thm:symmetric-basis}
Every symmetric polynomial in $\FF[x_1, \ldots, x_n]$ can be uniquely expressed as a polynomial in the elementary symmetric polynomials $e_1, e_2, \ldots, e_n$. Namely, for every symmetric polynomial $f \in \FF[x_1, \ldots, x_n]$, there exists a unique polynomial $g \in \FF[y_1, y_2, \ldots, y_n]$ such that
\[
f(x_1, \ldots, x_n) = g(e_1(x_1, \ldots, x_n), \ldots, e_n(x_1, \ldots, x_n)).
\]
\end{theorem}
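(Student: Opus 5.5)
We prove existence and uniqueness separately. Both rest on the lexicographic order on monomials of $\FF[x_1,\ldots,x_n]$, where $x^\alpha > x^\beta$ if the first nonzero entry of $\alpha-\beta$ is positive. This is a well-order on exponent vectors that is compatible with multiplication.

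For existence, we use induction on the lex-leading monomial. Let $f$ be symmetric and nonzero, with leading term $c\,x^\alpha$. Since $f$ is invariant under all permutations of the variables, every permutation of $\alpha$ also occurs in $f$, so maximality of $\alpha$ forces $\alpha_1\ge\alpha_2\ge\cdots\ge\alpha_n$. Next, the leading monomial of $e_k$ is $x_1x_2\cdots x_k$, and leading monomials are multiplicative. Hence the leading term of
\[
e_1^{\alpha_1-\alpha_2}e_2^{\alpha_2-\alpha_3}\cdots e_{n-1}^{\alpha_{n-1}-\alpha_n}e_n^{\alpha_n}
\]
is exactly $x^\alpha$, with coefficient $1$. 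Subtracting $c$ times this product from $f$ gives a symmetric polynomial of the same degree that is either zero or has strictly smaller leading monomial. Only finitely many monomials of a fixed total degree lie below $x^\alpha$, so the process terminates after finitely many steps. Collecting the subtracted terms yields the desired $g$. One could also work one homogeneous component at a time, but the argument above already suffices.

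For uniqueness, it is enough to show that $e_1,\ldots,e_n$ are algebraically independent. That is, if $g\in\FF[y_1,\ldots,y_n]$ is nonzero, then $g(e_1,\ldots,e_n)\neq 0$. Write $g=\sum_\beta c_\beta y^\beta$. Each monomial $y^\beta$ maps to $e_1^{\beta_1}\cdots e_n^{\beta_n}$, whose leading monomial is $x^{\gamma(\beta)}$ with $\gamma(\beta)_i=\beta_i+\beta_{i+1}+\cdots+\beta_n$. The map $\beta\mapsto\gamma(\beta)$ is injective, since $\beta_i=\gamma_i-\gamma_{i+1}$. Therefore, among the $\beta$ with $c_\beta\ne0$, exactly one attains the lex-largest $\gamma(\beta)$. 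Its leading monomial $x^{\gamma(\beta)}$ cannot be cancelled by any other term, because every other term involves only monomials that are lex-smaller. So $g(e_1,\ldots,e_n)\ne 0$.

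The main point needing care is the multiplicativity of leading monomials together with the injectivity of $\beta\mapsto\gamma(\beta)$. This is routine: lex is a monomial order and $\FF[x]$ is an integral domain. Existence requires no assumption on the characteristic of $\FF$, since only subtraction of integer-coefficient products is used.
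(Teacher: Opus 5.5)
Your proof is correct, and it is the standard leading-term argument; the paper gives no proof of its own but cites Macdonald, where the same idea appears as the unitriangularity $e_{\lambda'} = m_\lambda + \sum_{\mu<\lambda} a_{\lambda\mu} m_\mu$ in dominance order, and your lex-order bookkeeping is a direct polynomial version of it. One small wording fix: for non-homogeneous $f$ the successive leading monomials need not share a total degree, and the degree can drop, so justify termination either by the well-ordering of lex that you already stated, or by noting that every intermediate polynomial has degree at most $\deg f$ and only finitely many monomials have such degree.
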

An alternative set of algebraically independent generators is provided by the \emph{power sum symmetric polynomials}. The following theorem follows from the Newton identities, which relate the power sum symmetric polynomials to the elementary symmetric polynomials; see \cite{macdonald1995}.
\begin{theorem}[Power sum generators]\label{thm:power-basis}
For $n$ variables over $\CC$, the power sum symmetric polynomials
\[
p_1, p_2, \ldots, p_n
\]
are algebraically independent and generate the ring of symmetric polynomials.
Every symmetric polynomial can be uniquely expressed as a polynomial in $p_1, \ldots, p_n$.
Namely, for every symmetric polynomial $f \in \CC[x_1, \ldots, x_n]$, there exists a unique polynomial $g \in \CC[y_1, y_2, \ldots, y_n]$ such that
\[  f(x_1, \ldots, x_n) = g(p_1(x_1, \ldots, x_n), \ldots, p_n(x_1, \ldots, x_n)). \]
\end{theorem}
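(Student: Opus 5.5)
The plan is to derive the statement from \Cref{thm:symmetric-basis} by exhibiting a triangular, invertible change of generators between $e_1,\ldots,e_n$ and $p_1,\ldots,p_n$. The tool is the Newton identities
\[
k\,e_k=\sum_{i=1}^{k}(-1)^{i-1}e_{k-i}\,p_i \qquad (1\le k\le n),\qquad e_0\defeq 1.
\]
I would prove them from the generating function $E(t)\defeq\prod_{i=1}^n(1+x_it)=\sum_{k}e_kt^k$. Taking the logarithmic derivative gives $E'(t)/E(t)=\sum_{i}x_i/(1+x_it)=\sum_{j\ge1}(-1)^{j-1}p_jt^{j-1}$. Multiplying by $E(t)$ and comparing coefficients of $t^{k-1}$ yields the identities.

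For existence, I work over $\CC$, where $k$ is invertible. Dividing the $k$-th identity by $k$ expresses $e_k$ in terms of $p_1,\ldots,p_k$ and $e_1,\ldots,e_{k-1}$. By induction on $k$, each $e_k=E_k(p_1,\ldots,p_k)$ for some polynomial $E_k$. Given a symmetric $f$, \Cref{thm:symmetric-basis} writes $f=g(e_1,\ldots,e_n)$. Hence $f=g(E_1(p),\ldots,E_n(p))$, so the $p_i$ generate the ring of symmetric polynomials.

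For uniqueness, it suffices to show that $p_1,\ldots,p_n$ are algebraically independent. The same identities read in the other direction give, by induction, $p_k=P_k(e_1,\ldots,e_k)$, where
\[
P_k(y)=(-1)^{k-1}k\,y_k+R_k(y_1,\ldots,y_{k-1}).
\]
Let $\varphi\colon\CC[y_1,\ldots,y_n]\to\CC[y_1,\ldots,y_n]$ be the substitution $y_k\mapsto P_k$. Because each $P_k$ has nonzero leading coefficient in $y_k$ and otherwise depends only on $y_1,\ldots,y_{k-1}$, the substitution $y_k\mapsto E_k$ built above is a two-sided inverse of $\varphi$. One way to see this is that $E_k(P_1,\ldots,P_k)=y_k$ holds after plugging in $y=e$, so by the uniqueness part of \Cref{thm:symmetric-basis} it holds identically; the other composition is handled symmetrically, using $e$-independence once more. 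In any case $\varphi$ is injective.

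Now suppose $g(p_1,\ldots,p_n)=0$. Then $(\varphi g)(e_1,\ldots,e_n)=0$. By the uniqueness (algebraic independence of the $e_i$) in \Cref{thm:symmetric-basis}, $\varphi g=0$, and injectivity of $\varphi$ gives $g=0$. This proves the uniqueness of the representation.

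The main obstacle is the bookkeeping that makes the triangular structure rigorous: checking the leading coefficient $(-1)^{k-1}k$ of $y_k$ in $P_k$, and deriving invertibility of $\varphi$ cleanly. I would argue injectivity directly via a leading-term argument: order monomials by the largest index appearing, and note that $\varphi$ maps the leading monomial of $g$ to a nonzero multiple of itself plus lower terms. This is also the point where characteristic zero is needed, since over characteristic $p$ the coefficient $k$ can vanish and the statement fails.
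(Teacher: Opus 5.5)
Your proposal is correct and follows the route the paper indicates. The paper gives no proof beyond pointing to the Newton identities and Macdonald: the Newton identities give a triangular change of generators between the $e_k$ and the $p_k$ with diagonal coefficients $(-1)^{k-1}k\neq 0$, and \Cref{thm:symmetric-basis} then transfers generation and algebraic independence.

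One correction concerns injectivity of $\varphi$. The identity $E_k(P_1,\ldots,P_k)=y_k$, which you obtain from $e$-independence, only shows that $\varphi$ is surjective. The reverse identity $P_k(E_1,\ldots,E_k)=y_k$ is what yields injectivity, and checking it by evaluation would require $p$-independence already, so the ``symmetric'' argument is circular. Injectivity therefore rests on your leading-term argument in the lexicographic order with $y_n>\cdots>y_1$, which is sound; alternatively, a surjective endomorphism of a Noetherian ring is injective.
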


\subsection{Algebraic Complexity Classes and Reductions.}
\label{sec:apx-act}

In algebraic complexity theory, reductions between polynomials are typically defined via \emph{projections}. A polynomial $f \in \FF[x_1, x_2, \ldots, x_n]$ is said to be a projection of another polynomial $g \in \FF[y_1, y_2, \ldots, y_m]$ if $f$ can be obtained from $g$ by substituting each variable $y_i$ with either a variable $x_j$ or a constant from the field $\FF$. This notion of reduction is quite restrictive, since it does not allow adaptive computations or multiple queries to the polynomial being reduced to.

A more flexible notion is given by $c$-reductions. To define it, we introduce the oracle complexity of a polynomial. The oracle complexity of a polynomial $f$ with respect to another polynomial $g$ is the minimum size of an arithmetic circuit computing $f$ that is allowed to query an oracle for $g$. Following the notation of \cite{curticapeanmonomialmymmetricpolynomials2022}, we denote this quantity by $L^g(f)$. If $L^g(f)$ is polynomially bounded in the number of variables and degree of $f$ and $g$, we simply write $f \le_c g$ to simplify the notation. Throughout the paper, the complexity notation $\tilde{O}$ hides polylogarithmic factors. 

Now we define the notion of $\VNP$-completeness under $c$-reductions.  A sequence $(u) = (u_n)_{n \in \NN}$ of non-negative integers  is called   $p$-bounded if there exists a polynomial function $p$ such that for all $n \in \NN$, we have $u_n \le p(n)$. 

A sequence $(f_n)_{n \in \NN}$ of polynomials is called a $p$-family if the sequences of variable counts and degrees are $p$-bounded.
\begin{definition}
The class $\mathsf{VP}$ consists of all  $p$-families $(f_n)_{n \in \NN}$ such that  $(L(f_n))_{n \in \NN}$ is $p$-bounded.
\end{definition}

\begin{definition}
The class $\mathsf{VNP}$ consists of all $p$-families $(f_n)_{n \in \NN}$ where $f_n \in \FF[x_1, \ldots, x_{v_n}]$ such that there exists a polynomial family $(g_n)_{n \in \NN}$ in $\mathsf{VP}$ with $g_n\in \FF[x_1, \ldots, x_{v_n}, y_1, \ldots, y_{u_n}]$  satisfying:
\[  f_n(x_1, \ldots, x_{v_n}) = \sum_{b \in \{0,1\}^{u_n}} g_n(x_1, \ldots, x_{v_n}, b_1, \ldots, b_{u_n}). \]
\end{definition}
A $p$-family $(f_n)_{n \in \NN}$ is $c$-reducible to $(g_n)_{n \in \NN}$, denoted by $(f_n) \preceq_c (g_n)$, if there exists a  $p$-bounded sequence $(t_n)_{n \in \NN}$ such that $L^{g_{t_n}}(f_n)$ is $p$-bounded. It is well known that the permanent polynomial family $(\Per_n)_{n \in \NN}$ defined as \[\Per_n(X) = \sum_{\sigma \in S_n} \prod_{i=1}^n x_{i,\sigma(i)},\] is $\VNP$-complete under $c$-reductions. In fact, $(\Per_n)_{n \in \NN}$ is $\VNP$-complete under the more restrictive notion of $p$-projections, which is a special case of $c$-reductions where the reduction is given by a single substitution of variables. The following lemma is an immediate consequence of the definition of $c$-reductions.

\begin{lemma}\label{lemma:c-reduction-composition}
If $f,g,h$ are polynomials such that $L^f(g) \le a$ and $L^g(h) \le b$, then $L^f(h) \le O(ab)$.
\end{lemma}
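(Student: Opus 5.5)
This follows directly from unfolding the definition of oracle complexity. Take an optimal oracle circuit $C_g$ for $g$ with oracle $f$; it has at most $a$ gates, some of which are $f$-oracle gates. Take an optimal oracle circuit $C_h$ for $h$ with oracle $g$; it has at most $b$ gates, some of which are $g$-oracle gates. The idea is to build a circuit for $h$ with oracle $f$ by substitution: every $g$-oracle gate of $C_h$ is replaced by a copy of $C_g$.

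The construction goes in three steps.

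First, list the $g$-oracle gates of $C_h$ in topological order. There are at most $b$ of them.

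Second, replace each such gate $\gamma$ by a fresh copy of $C_g$. The inputs of this copy are the wires that fed into $\gamma$, so the copy's input variables are identified with the subcircuits computing $\gamma$'s arguments. The output of the copy takes over the role of $\gamma$'s output. The $f$-oracle gates inside each copy stay as they are.

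Third, check correctness and count size. Correctness holds by induction along the topological order: each copy of $C_g$, applied to the polynomials arriving at $\gamma$, computes exactly $g$ evaluated at those polynomials, which is what $\gamma$ computed. So the new circuit computes $h$ and uses only $f$-oracle gates. For the size, there are at most $b$ copies of size at most $a$, plus at most $b$ original gates, giving a total of at most $ab+b\le 2ab=O(ab)$.

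The only point needing care is the size convention for oracle gates. If an oracle gate's cost counts its fan-in, the wiring of a copy's inputs adds at most the fan-in of $\gamma$, which is already charged to $C_h$, so the $O(ab)$ bound still holds. If the inputs of the copy of $C_g$ are constants rather than variables, they are handled the same way. No deeper obstacle arises; this is the routine composition step, and transitivity of $\le_c$ follows because a product of polynomial bounds is polynomial.
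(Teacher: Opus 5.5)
Your substitution argument (replace each $g$-oracle gate of $C_h$ by a copy of $C_g$, giving size at most $ab+b\le 2ab$) is correct. It is the standard unfolding of the definition that the paper relies on; the paper gives no proof of its own and simply calls the lemma an immediate consequence of the definition of $c$-reductions.
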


\subsection{Tools From Algebraic Complexity Theory.}
\label{sec:apx-acttools}

\Cref{cor:binet-minc} expresses the monomial symmetric polynomial $m_{\bl}$ as a polynomial in the power sum symmetric polynomials. Assume that the number of variables $n$ is at least the degree of $m_{\bl}$, i.e., $n \ge d(\bl)$.  Then by \Cref{thm:power-basis}, the power sum symmetric polynomials $p_1, p_2, \ldots, p_n$ are algebraically independent and generate the ring of symmetric polynomials. Theorem~4 of \cite{blasercomplexitysymmetricpolynomials2019}, together with the efficient Newton identities, implies that the arithmetic complexity of $m_{\bl}^{\pow}$ with respect to the oracle $m_{\bl}$ is quite small. This result is stated more explicitly in \cite[Theorem~4.16]{chaugule2023schurpolynomialsdonothavesmallformulasifdeterminantdoesnt}.

\begin{lemma}[\cite{blasercomplexitysymmetricpolynomials2019,chaugule2023schurpolynomialsdonothavesmallformulasifdeterminantdoesnt}]
\label{lem:pow-reduction}
For any partition $\bl$ and all $n \ge d(\bl)$, we have
\[
L^{m_{\bl}(x_1,\ldots,x_n)}\paren{m_{\bl}^{\pow}} \le \tilde{O}(n^4 (d(\bl))^3).
\]
\end{lemma}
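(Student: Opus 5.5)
The plan is to obtain the bound by composing two reductions, following the cited works: first compute $m_{\bl}^{\elem}$ from the oracle $m_{\bl}$, and then pass from the elementary basis to the power-sum basis using Newton's identities. Write $d \defeq d(\bl)$, so that $n\ge d$. The composition is via \Cref{lemma:c-reduction-composition}:
\[
m_{\bl}^{\pow}(y_1,\ldots,y_n)=m_{\bl}^{\elem}\bigl(E_1(\mathbf y),\ldots,E_n(\mathbf y)\bigr),
\]
where $E_k(\mathbf y)$ is the unique polynomial expressing $e_k$ in terms of $p_1,\ldots,p_k$. This identity holds because, by \Cref{thm:power-basis} and \Cref{thm:symmetric-basis}, both sides are the unique representatives of the same symmetric polynomial once $n\ge d$. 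Since $m_{\bl}$ is homogeneous of degree $d$, $m_{\bl}^{\elem}$ only involves $y_1,\ldots,y_d$. We may therefore substitute $E_k$ only for $k\le d$.

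\emph{Step 1 (Newton identities).} The identities $k e_k=\sum_{i=1}^{k}(-1)^{i-1}e_{k-i}p_i$ hold over $\CC$. They give every $E_k$, $k\le d$, by a single circuit of size $O(d^2)$ that computes $E_1,E_2,\ldots$ successively, dividing by the constant $k$ at each stage. This part needs no oracle.

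\emph{Step 2 (elementary representative from the oracle).} Here I would invoke \cite[Theorem~4]{blasercomplexitysymmetricpolynomials2019} in the explicit form \cite[Theorem~4.16]{chaugule2023schurpolynomialsdonothavesmallformulasifdeterminantdoesnt}. For a symmetric $f$ of degree $d$ in $n$ variables, this result gives $L^{f}(f^{\elem})\le \tilde O(n^4d^{2})$, or a bound of this shape. The underlying idea is as follows. The variables $x_1,\ldots,x_n$ are the roots of $T^n-y_1T^{n-1}+\cdots\pm y_n$, and this root map is not polynomial. One circumvents this by feeding the oracle suitable polynomial substitutions, evaluating at polynomially many points, and recovering $f^{\elem}$ by linear interpolation. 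The interpolation uses that $f^{\elem}$ is weighted-homogeneous of weighted degree $d$, hence has degree at most $d$. Applying this with $f=m_{\bl}$ and then substituting the circuits for $E_k$ from Step 1 multiplies the cost by at most a further $O(d)$ factor. This yields the claimed $\tilde O(n^4 d^3)$.

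The main obstacle is Step 2, i.e.\ evaluating the elementary representative through an oracle that only accepts actual inputs $x_i$, and doing so with polynomial overhead. If I had to reprove it rather than cite it, I would first try the roots-of-unity device. One splits the $n\ge d$ variables into groups of sizes $k=1,\ldots,d$ and sets the $j$-th variable of group $k$ to $u_k\omega_k^{j}$, where $\omega_k$ is a primitive $k$-th root of unity. This makes the power sums triangular: $p_i=\sum_{k\mid i}k\,v_k^{i/k}$ with $v_k\defeq u_k^k$. This triangular map can be inverted polynomially, giving $v$ as a polynomial in $\mathbf y$. The remaining delicate point is to realise $m_{\bl}$ as a polynomial in the $v_k$ rather than in the $u_k$. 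Invariance under $u_k\mapsto\omega_k u_k$ guarantees that it is one, but extracting it efficiently requires care and would be done by interpolation. For the purposes of this paper, however, citing the theorem suffices.
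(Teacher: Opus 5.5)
Your proposal takes the same route as the paper, which justifies this lemma only by citing \cite[Theorem~4]{blasercomplexitysymmetricpolynomials2019} (in the explicit form \cite[Theorem~4.16]{chaugule2023schurpolynomialsdonothavesmallformulasifdeterminantdoesnt}) to pass from the oracle $m_{\bl}$ to its elementary representative, and then applying the efficient Newton identities to switch to the power-sum basis. Two minor remarks that do not affect the argument: substituting your $O(d^2)$-size circuits for $E_1,\ldots,E_d$ only adds $O(d^2)$ to the cost, so your ``further $O(d)$ factor'' is a valid but loose overestimate; and your fallback roots-of-unity sketch would need $\binom{d+1}{2}$ variables rather than just $n\ge d$, since the groups have sizes $1,\ldots,d$.
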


These results prove a more general statement: for any symmetric polynomial, the complexity of expressing it as a polynomial in the power sums is small relative to an oracle for the original polynomial. The above lemma is the special case corresponding to monomial symmetric polynomials.

We will also need the following folklore result concerning the arithmetic complexity of homogeneous components of polynomials. For a polynomial $f \in \FF[x_1,\ldots,x_n]$ and a non-negative integer $m$, we denote by $\Hom_m(f)$ the homogeneous component of $f$ of degree $m$, i.e., the sum of all monomials in $f$ that have total degree $m$.

\begin{lemma}[Folklore, Lemma 12 in \cite{blasercomplexitysymmetricpolynomials2019}]
\label{lem:homogeneous-component-complexity}
Let $f \in \FF[x_1,\ldots,x_n]$ be a polynomial of degree $d$. Then for any $m \in \{0,1,\ldots,d\}$, we have:
\[
L^f(\Hom_m(f)) \le O(nd).
\]
\end{lemma}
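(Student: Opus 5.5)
We compute $\Hom_m(f)$ by interpolation along a new scalar. Introduce a fresh variable $\lambda$ and set
\[
F(\lambda) \defeq f(\lambda x_1,\ldots,\lambda x_n) = \sum_{j=0}^{d} \lambda^j \Hom_j(f)(x_1,\ldots,x_n).
\]
This is a univariate polynomial in $\lambda$ of degree at most $d$, whose coefficients are exactly the homogeneous components of $f$. The plan is to evaluate $F$ at $d+1$ distinct field constants $\alpha_0,\ldots,\alpha_d\in\FF$; since $\FF=\CC$ is infinite, such constants exist, and we may take $\alpha_i = i$. We then recover the coefficient of $\lambda^m$ by a fixed linear combination of these evaluations.

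Concretely, for each $i$, first compute the $n$ products $\alpha_i x_1,\ldots,\alpha_i x_n$, using $n$ multiplication gates by constants. Then feed these into one oracle gate for $f$, obtaining $F(\alpha_i)$. Over all $i$ this costs $O(n(d+1))$ gates and $d+1$ oracle calls.

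Let $V=(\alpha_i^j)_{i,j}$ be the $(d+1)\times(d+1)$ Vandermonde matrix. It is invertible because the $\alpha_i$ are distinct. If $c_0,\ldots,c_d$ denotes the $m$-th row of $V^{-1}$, then
\[
\Hom_m(f)=\sum_{i=0}^d c_i F(\alpha_i),
\]
because $(F(\alpha_i))_i = V\cdot(\Hom_j(f))_j$ as vectors of polynomials. The constants $c_i$ lie in $\FF$ and are hardwired into the circuit, so this final step costs $O(d)$ further gates. The total size is $O(nd)$, counting each oracle gate as size one, as is standard.

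The only point needing care is the accounting convention for oracle gates and for the degree-zero case. When $d=0$ the bound should be read as $O(n+1)$; it is trivial there, since a single oracle call suffices. Nothing in the argument depends on the structure of $f$ beyond its degree, so I expect no genuine obstacle.
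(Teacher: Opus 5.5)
Your proof is correct and is the standard argument. The paper gives no proof of its own (it cites the lemma as folklore from Bl\"aser--Jindal), but the ``interpolation in an auxiliary scaling variable'' it invokes in \Cref{lem:blowup-match-perfmatch} and \Cref{lem:divide-by-monomial} is exactly your method: substitute $x_i \gets \alpha x_i$ at $d+1$ distinct points, then invert the Vandermonde system, with your count of $n(d+1)$ scalings, $d+1$ oracle calls and $O(d)$ final gates giving $O(nd)$.
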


We also need a result on the arithmetic complexity of dividing polynomials by a monomial factor. This follows immediately from results in \cite{strassen1973vermeidungdivisionen,kaltofen1986uniformclosurepropertiespcomputable, kaltofen1987singlefactorhensellifting, kaltofen1985polynomialtimereductionsmultivariate, kaltofen1989factorizationpolynomialsgivenbystraightlineprograms}, but we give a self-contained proof below for completeness.
\begin{lemma}
\label{lem:divide-by-monomial}
Let $f \in \FF[x_1,\ldots,x_n]$ be a polynomial of degree $d$. Suppose $m$ is a monomial factor of $f$. Then
\[
L^f(f/m) \le \poly(n,d).
\]
\end{lemma}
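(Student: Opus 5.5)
Since $m$ divides $f$, we write $m = x_1^{e_1}\cdots x_n^{e_n}$ with $\sum_i e_i \le d$, and we plan to divide out one variable power at a time. The complexity bound then follows by composing the single-variable steps via \Cref{lemma:c-reduction-composition}. Because each step will use only a constant number of nested oracle layers and at most $d$ factors are removed in total, we will instead build a single circuit directly, so that the costs add rather than multiply. So it suffices to show the following: if $x_1^{e}$ divides $f$, then $f/x_1^{e}$ has small oracle complexity relative to $f$.

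For that step we use interpolation in an auxiliary variable, in the style of \Cref{lem:homogeneous-component-complexity}. Expand
\[
f = \sum_{j=e}^{d} x_1^{j} f_j(x_2,\ldots,x_n),
\qquad
f/x_1^e = \sum_{j=e}^{d} x_1^{j-e} f_j .
\]
We evaluate the oracle at $f(\alpha_s x_1, x_2,\ldots,x_n) = \sum_j \alpha_s^j x_1^j f_j$ for $d+1$ distinct nonzero constants $\alpha_0,\ldots,\alpha_d$. A Vandermonde inverse with constant entries then recovers each term $x_1^j f_j$ as a fixed linear combination of these $d+1$ oracle calls.

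This alone does not remove the factor $x_1^j$, since circuits have no division. The fix is to recover the $f_j$ without $x_1$ at all: substitute $x_1 \gets \alpha_s$ (constants) and interpolate, which yields $f_j(x_2,\ldots,x_n)$ for every $j$ using $d+1$ oracle calls and an $O(d^2)$-size linear combination. We then output $\sum_{j\ge e} x_1^{j-e} f_j$, where the powers of $x_1$ cost $O(d)$ gates. Each step therefore uses $O(d)$ oracle gates and $O(d^2)$ extra gates. Since each oracle call only substitutes constants for one variable, the calls for different variables can be nested.

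To handle all variables at once and avoid a blow-up from nesting, we will substitute $x_i \gets \alpha_{s_i}$ simultaneously only for the variables in the support of $m$. This would need $(d+1)^{n}$ calls, which is too many. We therefore prefer a sequential composition: define $g_1 = f/x_1^{e_1}$ as a circuit with $O(d)$ oracle gates. Then $g_2 = g_1/x_2^{e_2}$ is obtained by treating $g_1$'s circuit as an oracle, again with $O(d)$ copies, and so on. Naively this gives $d^{n}$ copies of the oracle, and this is the main obstacle.

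We will circumvent it with a single combined substitution. Use the Kronecker-type scaling $x_i \gets \beta^{i} x_i$ only to separate terms, or, more simply, note that $f/m$ equals the part of $f$ where $x_i$ is replaced by $x_i$ and divided. A cleaner route is to apply the argument above with one auxiliary variable $\lambda$. Substitute $x_i \gets \lambda x_i$ for exactly those $i$ in the support of $m$; this is still at most $d+1$ oracle calls. It grades $f$ by its degree $D$ in the support variables, and each graded piece is a polynomial. Its division by $m$ is still a division problem, though, so this reduces nothing by itself. The grading also fails to handle the per-variable exponents.

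We therefore fall back on the standard strategy of Strassen's division elimination. Shift the variables, $x_i \gets x_i + 1$, so that $m(x+1)$ has constant term $1$. Then compute $f(x+1)\cdot m(x+1)^{-1}$ as a truncated power series up to degree $d$, using $1/(1-u) = \sum_{k\le d} u^k$ with $u = 1 - m(x+1)$. Extract the homogeneous components of degree at most $d$ via \Cref{lem:homogeneous-component-complexity}, and finally shift back with $x_i \gets x_i - 1$. This needs one oracle call and $\poly(n,d)$ gates: $m(x+1)$ costs $O(d)$, the truncated geometric series costs $O(d^2)$, and homogeneous extraction adds another $\poly(n,d)$. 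Correctness holds because $f/m$ is a polynomial of degree at most $d$, so it agrees with the truncated power series. The main care point is the truncation of the components, which we expect to be routine.
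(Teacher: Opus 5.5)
Your final argument is exactly the paper's proof, so it is correct: shift $x_i \gets x_i+1$ so that $m(x+1)=1-u$ with $u(0)=0$, multiply $f(x+1)$ by $\sum_{k\le d}u^k$, keep the homogeneous components of degree at most $d$ (justified by $f(x+1)\sum_{k\le d}u^k=(f/m)(x+1)\,(1-u^{d+1})$, where $u^{d+1}$ has no monomials of degree $\le d$), and shift back. The earlier single-variable and nesting detours are unnecessary, and the opening claim that the costs would add is unjustified; also, the extraction step needs $O(d^2)$ evaluations of the shifted product rather than one oracle call, which is still polynomial in $n$ and $d$.
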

\begin{proof}
Suppose $f\in \FF[x_1,\ldots,x_n]$ is a polynomial of degree $d$. Suppose $m$ is a monomial factor of $f$. Then $f = g m$, where $g$ is a polynomial of degree at most $d$. Define $f_{\mathrm{shift}}(x)\defeq f(x_1+1,\ldots,x_n+1)$ and $g_{\mathrm{shift}}(x)\defeq g(x_1+1,\ldots,x_n+1)$.
The oracle for $f$ computes $f_{\mathrm{shift}}$ by querying $f$ at the shifted inputs. This substitution does not change the degree of the polynomial, so $f_{\mathrm{shift}}$ is still a polynomial of degree $d$. Moreover, under this substitution, the monomial $m$ is mapped to a polynomial of the form $1 - h$ with constant term of $h$ being zero. The polynomial $h$ is explicitly computable from the monomial $m$, and hence so are its powers. Thus we can write $f_{\mathrm{shift}} = g_{\mathrm{shift}} (1-h)$. Now we have $g_{\mathrm{shift}}= f_{\mathrm{shift}}/(1-h) = f_{\mathrm{shift}}(1+h+h^2+\cdots)$.  We use $F$ to denote the polynomial $f_{\mathrm{shift}}(1+h+h^2+\cdots+h^d)$. Note that $g_{\mathrm{shift}}$ is a polynomial of degree at most $d$, hence we have:
\[g_{\mathrm{shift}} = \sum_{i=0}^d \Hom_i(F).\]
Indeed, $F=g_{\mathrm{shift}}(1-h^{d+1})$, and the term $g_{\mathrm{shift}}h^{d+1}$ has no homogeneous component of degree at most $d$ because $h$ has zero constant term. Since $\deg(F)=O(d^2)$, the sum of the homogeneous components of $F$ of degree at most $d$ can be recovered by the usual interpolation in an auxiliary scaling variable using polynomially many evaluations of $F$. Finally substituting $x_i\gets x_i-1$ recovers $g$ from $g_{\mathrm{shift}}$. Thus $L^f(g) \le \poly(n,d)$.
\end{proof}

\section{Omitted Proofs.}\label{sec:missing-proofs}

\begin{proof}[Proof sketch of \Cref{thm:easy-case}]
Let $\bl=(\mu_1^{w_1}, \mu_2^{w_2}, \ldots, \mu_t^{w_t})$ be a partition with at most $t$ distinct parts. Then the monomial symmetric polynomial $m_{\bl}$ can be written as the coefficient of $y_1^{w_1} y_2^{w_2} \cdots y_t^{w_t}$ in the polynomial $G$ defined below:
\[
G(x_1, x_2, \ldots, x_n, y_1, y_2, \ldots, y_t) \defeq \prod_{i=1}^n \paren{1+\sum_{j=1}^t y_jx_i^{\mu_j}}.
\]
Note that the individual degree of each $y_j$ in $G$ is at most $n$. We now think of $G$ as a polynomial in the variables $y_1, y_2, \ldots, y_t$ with coefficients in $\FF[x_1, x_2, \ldots, x_n]$. We use the well-known Kronecker substitution and replace $y_i$ by $y^{(n+1)^{i-1}}$ for all $i\in[t]$ to obtain a univariate polynomial $H(y)$ in $y$ with coefficients in $\FF[x_1, x_2, \ldots, x_n]$. This mapping ensures that each monomial $y_1^{a_1} y_2^{a_2} \cdots y_t^{a_t}$  of $G$ is mapped to a unique monomial $y^{a_1 + a_2(n+1) + \cdots + a_t(n+1)^{t-1}}$ of $H(y)$. Define $W\defeq w_1 + w_2(n+1) + \cdots + w_t(n+1)^{t-1}$.

In particular, the monomial $y_1^{w_1} y_2^{w_2} \cdots y_t^{w_t}$ is mapped to $y^{W}$. Hence, we can extract the coefficient of $y^{W}$ in $H(y)$ to obtain $m_{\bl}$. Since the degree of $H(y)$ is at most $(n+1)^t-1$, we can evaluate $H(y)$ at $(n+1)^t$ many points and interpolate to extract the coefficient of $y^{W}$. For one evaluation point $\alpha \in \FF$, computing all powers $x_i^{\mu_j}$ with $\mu_j\le d$, forming the $n$ factors, and multiplying them costs $O(nt\cdot \poly(d))$ gates. Since $\bl$ is a partition of degree $d$ into positive parts, we have $t\le d$, so this per-point cost is at most $n\cdot\poly(d)$. The $(n+1)^t=n^{O(t)}$ interpolation points and final linear combination give the claimed $\poly(d)\cdot n^{O(t)}$ bound.
\end{proof}

\section[Proof of the isolation-weight lemma.]{Proof of \Cref{lem:isolation-weight}.}
\label{sec:apx-isolation-weight}

Recall that the matching-generating polynomial of $H$ is
$M(H,x) \defeq \sum_{k} m_k \, x^k$.
Throughout this section, fix $1\le b\le a$ and abbreviate
\begin{align*}
f(x) &=M(K_{a,b},x) \\
g(x) & =M(K_{a-1,b},x).
\end{align*}
By definition,
\[
f(x)=\sum_{k=0}^{b} \binom{a}{k}\binom{b}{k} k! \, x^k,
\]
so $\deg f=b$, $\deg g\le b$, and $f(0)=g(0)=1$. Moreover $f\not\equiv g$, for instance because the coefficient of $x$ is $ab$ in $f$ and $(a-1)b$ in $g$.

We will use the standard relation between matchings in complete bipartite graphs and generalized Laguerre polynomials. For $\alpha>-1$,
\[
L_n^{(\alpha)}(y)\defeq \sum_{k=0}^{n}(-1)^k\binom{n+\alpha}{n-k}\frac{y^k}{k!}.
\]
A direct coefficient comparison gives
\[
M(K_{a,b},x)=b!\,x^b\,L_b^{(a-b)}(-x^{-1}).
\]
By \cite[Section~5.1, (5.1.1)]{szegoorthogonal1975}, $L_n^{(\alpha)}$ is orthogonal on $[0,\infty)$ with weight $e^{-x}x^\alpha$ for $\alpha>-1$. Since $a-b\ge0$, Szeg\H{o}'s zero theorem \cite[Theorem~3.3.1]{szegoorthogonal1975} implies that $L_b^{(a-b)}$ has $b$ distinct positive real zeros. Hence the roots of $f=M(K_{a,b},x)$ are simple and nonzero.

\begin{proof}[Proof of \Cref{lem:isolation-weight}]
The preceding discussion shows that $f$ has $b$ distinct nonzero roots.
If $g$ vanished on all these roots, then $f$ and $g$ would agree on $b+1$ points, namely on $0$ and the $b$ roots.
This would imply $f \equiv g$, because $f$ and $g$ are of degree at most $b$, contradicting $f\not\equiv g$.
It follows that there is a root $\xi$ of $f$ that is not a root of $g$; this is our desired value $\xi =\xi(a,b)$.
\end{proof}

\end{document}